\def\hb{\hbox to 10.7 cm{}}
\setlist[itemize]{leftmargin=*,noitemsep}
\newcommand{\Ra}{\Rightarrow} 
\newcommand{\impl}{\supset} 
\newcommand{\nc}{{\mathrel|\joinrel\sim}} 
\newcommand{\nnc}{{\mathrel|\joinrel\not\sim}} 
\newcommand{\tuple}[1]{\left\langle #1 \right\rangle}
\newcommand{\ncass}[1]{\nc_{{\sf A},{#1}}}
\newcommand{\nncass}[1]{\nnc_{{\sf A},{#1}}}
\newcommand{\ncaba}[1]{\nc_{{\sf ABA},{#1}}}
\newcommand{\nncaba}[1]{\nnc_{{\sf ABA},{#1}}}
\newcommand{\ncmcs}{\nc_{\mathsf{mcs}}}
\newcommand{\nncmcs}{\nnc_{\mathsf{mcs}}}
\newcommand{\AFLARS}{{\sf AF}_{{\sf L},{\sf AR}}({\sf S})} 
\newcommand{\AFLS}{{\sf AF}_{\sf L} ({\sf S})} 
\newcommand{\AFLSAS}{{\sf AF}_{\sf L} ({\sf S},{\sf A})} 
\newcommand{\AFABA}{{\sf AF}_{\left\langle\mathcal{L},\mathcal{R}\right\rangle}({\sf S},{\sf A})} 
\newcommand{\AFseqABA}{\calAF_{\tuple{\calL,\calR_\Rightarrow}}^{\sf ABA_\Ra}(\calS,\calA)} 
\newcommand{\ArgL}{\text{Arg}_{\sf L}} 
\newcommand{\ArgLS}{\text{Arg}_{\sf L}({\sf S})} 
\newcommand{\ArgLSAS}{\text{Arg}_{{\sf L}}({\sf S},{\sf A})} 
\newcommand{\ArgLRSA}{\text{Arg}_{\left\langle\mathcal{L},\mathcal{R}\right\rangle}^{\sf ABA}({\sf S},{\sf A})} 
\newcommand{\ArgLR}{\text{Arg}_{\left\langle\mathcal{L},\mathcal{R}\right\rangle}^{\sf ABA}} 
\newcommand{\ArgLRseqSA}{\text{Arg}_{\left\langle\mathcal{L},\mathcal{R}_\Rightarrow\right\rangle}^{\sf ABA_{\Rightarrow}}({\sf S},{\sf A})} 
\newcommand{\ArgLRseq}{\text{Arg}_{\left\langle\mathcal{L},\mathcal{R}_\Rightarrow\right\rangle}^{\sf ABA_{\Rightarrow}}} 
\newcommand{\ARule}{{\sf AR}} 
\newcommand{\ATtack}{{\sf AT}} 
\newcommand{\ext}{\mathcal{E}} 
\newcommand{\sem}{{\sf sem}} 
\newcommand{\grd}{{\sf grd}} 
\newcommand{\cmp}{{\sf cmp}} 
\newcommand{\prf}{{\sf prf}} 
\newcommand{\stb}{{\sf stb}} 
\newcommand{\conc}{{\sf Conc}} 
\newcommand{\concs}{{\sf Concs}} 
\newcommand{\supp}{{\sf Supp}} 
\newcommand{\supps}{{\sf Supps}} 
\newcommand{\sub}{{\sf Sub}} 
\newcommand{\CN}{{\sf CN}} 
\newcommand{\free}{{\sf Free}} 
\newcommand{\exts}{{\sf Ext}} 
\newcommand{\ass}{{\sf Ass}} 
\newcommand{\MCS}{{\sf MCS}} 
\newcommand{\logic}{$\textsf{L} = \left\langle\mathcal{L},\vdash\right\rangle$}
\newcommand{\calc}{$\mathsf{C}$} 
\newcommand{\CL}{{\sf CL}} 
\newcommand{\LK}{{\sf LK}} 
\newcommand{\calA}{{\sf A}} 
\newcommand{\calC}{\mathcal{C}} 
\newcommand{\calL}{\mathcal{L}} 
\newcommand{\calR}{\mathcal{R}} 
\newcommand{\calS}{{\sf S}} 
\newcommand{\calT}{{\sf T}} 
\newcommand{\calAF}{{\sf AF}} 
\newcommand{\AS}{{\sf A}} 
\newcommand{\sfL}{{\sf L}} 
\newcommand{\sfS}{{\cal S}} 
\DeclareFontFamily{U}{matha}{}
\DeclareFontShape{U}{matha}{m}{n}{
  <-5.5>    matha5
  <5.5-6.5> matha6 
  <6.5-7.5> matha7
  <7.5-8.5> matha8
  <8.5-9.5> matha9
  <9.5-11>  matha10
  <11->     matha12
}{}
\DeclareSymbolFont{matha}{U}{matha}{m}{n}
\DeclareFontFamily{U}{mathx}{\hyphenchar\font45}
\DeclareFontShape{U}{mathx}{m}{n}{<-> mathx10}{}
\DeclareSymbolFont{mathx}{U}{mathx}{m}{n}
\DeclareMathDelimiter{\lfilet}     {4}{mathx}{"37}{mathx}{"37}
\newcommand{\acom}{\mathrel\big\lfilet}
\newtheorem{definition}{Definition}
\newtheorem{example}{Example}
\newtheorem{notation}{Notation}
\newtheorem{proposition}{Proposition}
\newtheorem{lemma}{Lemma}
\newtheorem{remark}{Remark}
\begin{document}



\title{Equipping sequent-based argumentation with \\defeasible assumptions}


\author{AnneMarie Borg\thanks{The author is supported by the Alexander von Humboldt Foundation and the German Ministry for Education and Research, and the Israel Science Foundation (grant 817/15).}}

%
\affil{Institute for Philosophy II, Ruhr-University Bochum, Germany}
\date{}

\maketitle

\begin{abstract}
In many expert and everyday reasoning contexts it is very useful to reason on the basis of defeasible assumptions. For instance, if the information at hand is incomplete we often use plausible assumptions, or if the information is conflicting we interpret it as consistent as possible. In this paper sequent-based argumentation, a form of logical argumentation in which arguments are represented by a sequent, is extended to incorporate assumptions. The resulting assumptive framework is general, in that some other approaches to reasoning with assumptions can adequately be represented in it. To exemplify this, we show that assumption-based argumentation can be expressed in assumptive sequent-based argumentation. 
\end{abstract}

\noindent
\textbf{Keywords:} nonmonotonic reasoning, structured argumentation, sequent-based argumentation, as\-sump\-tion-based argumentation, defeasible assumptions

\section{Introduction}

Assumptions are an important concept in defeasible reasoning. Often, in both expert and everyday reasoning, the information provided is not complete or it is inconsistent. By assuming additional information or considering consistent subsets of information, a conclusion can be reached in such cases. 
A well-known formal method for modeling defeasible reasoning is abstract argumentation theory, introduced by Dung~\cite{Dung95}. 
In logical argumentation, the arguments have a specific structure on which the attacks depend~\cite{BesHun01,Pra10}. One such logical argumentation framework is \emph{sequent-based argumentation}~\cite{ArStr15argcomp}, in which arguments are represented by sequents, as introduced by Gentzen~\cite{Gen34} and well-known in proof theory. Attacks between arguments are formulated by \emph{sequent elimination rules}, which are special inference rules. The resulting framework is generic and modular, in that any logic, with a corresponding sound and complete sequent calculus can be taken as the deductive base (the so-called \emph{core logic}). 


In this paper we extend sequent-based argumentation. To each sequent a component for assumptions is added. This way, a distinction can be made between strict and defeasible premises, to reach further conclusions. As an instance of the obtained framework, assumption-based argumentation (ABA)~\cite{BDKT97,DKT09,Toni14} is studied and the relation to reasoning with maximally consistent subsets~\cite{ReMa70} is investigated. The latter is a well-known method to maintain consistency, in view of inconsistent information. 
ABA is a structural argumentation framework which is also abstract, in that there are only limited assumptions on the underlying deductive system. It was introduced to determine a set of assumptions that can be accepted as a conclusion from the given information. 

Arguments in ABA are constructed by applying modus ponens to simple clauses of an inferential database. Only recently logic-based instantiations of ABA have been studied, mostly with classical logic as the core logic. Sequent-based argumentation, and the here introduced assumptive generalization, are more general and modular, in that these are based on a Tarskian core logic and the arguments are constructed via the inference rules of the corresponding sequent calculus. Logics that can be equipped with defeasible assumptions by means of assumptive sequent-based argumentation include, in addition to classical logic, intuitionistic logic, many of the well-known modal logics and several relevance logics. Hence, the results of this paper generalize to many deductive core systems, as long as the Tarskian conditions are fulfilled. 

Sequent calculi and sequent-based argumentation have some further advantages as well. For example, the latter comes equipped with a dynamic proof theory~\cite{ArStr15LFSA,ArStr17DD}, introduced to study argumentation from a proof theoretical perspective. These dynamic derivations provide a mechanism for deriving arguments as well as attacks and hence to reach conclusions for a given argumentation framework in an automatic way. Sequent calculi themselves have been investigated for many logics and purposes, mainly in the context of proof theory. A significant advantage over other proof systems is, that the premises can be manipulated within a proof, see also~\cite{SchHei03}. 





The paper is organized as follows. In the next section, sequent-based argumentation is recalled. Then, in Section~\ref{sec:AssumptiveSeq}, the general framework for assumptive sequent-based argumentation is introduced. This framework will be considered in Section~\ref{sec:ABA}, in which ABA is taken as an example, to show how the assumptive sequent-based framework can be applied. We conclude in Section~\ref{sec:Conclusion}.

\section{Sequent-based argumentation}

Throughout the paper only propositional languages are considered, denoted by $\calL$. Atomic formulas are denoted 
by $p,q$, formulas are denoted by $\phi,\psi$, sets of formulas are denoted by $\calS,\calT$, and finite sets of 
formulas are denoted by $\Gamma,\Delta$, later on we will denote sets of assumptions by $\calA$ and finite sets of assumptions by $A$, all of which can be primed or indexed.

\begin{definition}
\label{def:logic}
A \emph{logic\/} for a language $\calL$ is a pair \logic, where $\vdash$ is a (Tarskian) consequence relation for $\calL$, having the following properties: \emph{reflexivity:} if $\phi\in\calS$, then $\calS\vdash\phi$; \emph{transitivity:} if $\calS\vdash\phi$ and $\calS',\phi\vdash\psi$, then $\calS,\calS'\vdash\psi$; and \emph{monotonicity:} if $\calS'\vdash\phi$ and  $\calS'\subseteq\calS$, then $\calS\vdash\phi$. 
\end{definition}

As usual in logical argumentation (see, e.g.,~\cite{BesHun01,Pol92,Prak96,SiLou92}), arguments have a specific structure based on the underlying formal language, the \emph{core logic\/}. In the current setting arguments are represented by the well-known proof theoretical notion of a \emph{sequent\/}. 

\begin{definition}
  \label{def:seq}
  \label{def:seqargu}
  \label{not:suppconc}
Let \logic\ be a logic and $\calS$ a set of $\mathcal{L}$-formulas.
\begin{itemize}
    \item An \emph{$\calL$-sequent} (\emph{sequent} for short) is an expression of the form $\Gamma\Ra\Delta$, where $\Gamma$ and $\Delta$ are finite sets of formulas in $\calL$ and $\Ra$ is a symbol that does not appear in $\calL$.
    \item An \emph{$\sfL$-argument} (\emph{argument} for short) is an $\calL$-sequent $\Gamma\Ra\psi$,\footnote{Set signs in arguments are omitted.} where $\Gamma\vdash\psi$. $\Gamma$ is called the \emph{support set\/} of the argument and $\psi$ its \emph{conclusion\/}.
    \item An \emph{$\sfL$-argument based on $\calS$} is an $\sfL$-argument $\Gamma\Ra\psi$, where $\Gamma\subseteq\calS$. We denote by $\ArgLS$ the set of all the $\sfL$-arguments based on $\calS$. 
\end{itemize}
Given an argument $a = \Gamma \Ra \psi$, we denote $\supp(a) = \Gamma$ and $\conc(a) = \psi$. We say that 
$a'$ is a {\em sub-argument\/} of $a$ iff  $\supp(a') \subseteq\supp(a)$. The set of all the sub-arguments of $a$ 
is denoted by $\sub(a)$.
\end{definition}

The formal systems used for the construction of sequents (and so of arguments) for a logic \logic, are \emph{sequent calculi\/}~\cite{Gen34}, denoted here by \calc. In what follows we shall assume that \calc\ is sound and complete for \logic,
i.e., $\Gamma\Ra\psi$ is provable in \calc\ iff $\Gamma\vdash\psi$. One of the advantages of sequent-based argumentation is that any logic with a corresponding sound and complete sequent calculus can be used as the core logic.\footnote{See~\cite{ArStr15argcomp} for further advantages of this approach.} The construction of arguments from 
simpler arguments is done by the \emph{inference rules\/} of the sequent calculus~\cite{Gen34}. 

Argumentation systems contain also attacks between arguments. In our case, attacks are represented by \emph{sequent elimination rules}. Such a rule consists of an attacking argument (the first condition of the rule), an attacked argument (the last condition of the rule), conditions for the attack (the conditions in between) and a conclusion (the eliminated attacked sequent). The outcome of an application of such a rule is that the attacked sequent is `eliminated'. The elimination of a sequent $a=\Gamma\Ra\Delta$ is denoted by $\overline{a}$ or $\Gamma\not\Ra\Delta$.

\begin{definition}
  \label{def:elim}
  A \emph{sequent elimination rule\/} (or \emph{attack rule}) is a rule $\calR$ of the form:
  \begin{small}
\begin{equation}
  \label{eq:elim}
    \infer[\ \ \calR]{\Gamma_n\not\Ra\Delta_n}{\Gamma_1\Ra\Delta_1 & \ldots & \Gamma_n\Ra\Delta_n}
\end{equation}
\end{small}
It is said that $\Gamma_1\Ra\Delta_1$ \emph{$\calR$-attacks} $\Gamma_n\Ra\Delta_n$.
\end{definition}

\begin{example}
  \label{ex:seqelimrule}
  Suppose $\calL$ contains a $\vdash$-negation $\neg$ (where $p\nvdash p$ and $\neg p\nvdash\neg p$ for every atom $p$) and a $\vdash$-conjunction $\wedge$ (where $\calS\vdash\phi\wedge\psi$ iff $\calS\vdash\phi$ and $\calS\vdash\psi$). We refer to~\cite{ArStr15argcomp,StrAr15logcom} for a definition of a variety of attack rules. Assuming that $\Gamma_2\neq\emptyset$, two such rules are:
  \begin{small}
\begin{align*}
  & \text{Undercut (Ucut):} 
    \quad
  \infer[]{\Gamma_2,\Gamma'_2\not\Ra\psi_2}{
    \Gamma_1\Ra\psi_1 & \Ra\psi_1\leftrightarrow \neg\bigwedge\Gamma_2 & \Gamma_2,\Gamma'_2\Ra\psi_2
  }\\
  & \text{Direct Ucut (DUcut):}
    \quad
  \infer[]{\gamma,\Gamma'_2\not\Ra\psi_2}{
    \Gamma_1\Ra\psi_1 & \Ra\psi_1\leftrightarrow \neg\gamma & \gamma,\Gamma'_2\not\Ra\psi_2
  }
\end{align*}
\end{small}
\end{example}

A sequent-based framework is now defined as follows:

\begin{definition}
  \label{def:seqAF} 
A \emph{se\-quent\--based argumentation framework} for a set of formulas $\calS$ based on the logic \logic\ and a set $\ARule$ of sequent elimination rules, is a pair $\AFLARS = \tuple{\ArgLS, \ATtack}$, where $\ATtack\subseteq\ArgLS\times\ArgLS$ and $(a_1,a_2)\in\ATtack$ iff there is an $\calR\in\ARule$ such that $a_1$ $\calR$-attacks $a_2$. 
\end{definition}

In what follows, to simplify notation, we will omit the subscripts $\sfL$ and/or $\ARule$ when these are clear from the context or arbitrary. 

\begin{example}
  \label{ex:seqargumentation}
  Let $\calAF_{\CL,\{{\sf Ucut}\}}(\calS)$ be an argumentation framework, with classical logic as its core logic, Ucut the only attack rule and the set $\calS = \{p,p\impl q,\neg q\}$. Some of the arguments are: $a = p,p\impl q\Ra q$, $\ b= \neg q\Ra\neg q$, $\ c = p\Ra p$, $\ d =\ \ \Ra q\vee\neg q$ and $e = p\impl q, \neg q\Ra\neg p$.
  
  Note that $a$ and $e$ attack each other. Morever, $a$ attacks $b$ and $e$ attacks $c$. Since $\supp(d) = \emptyset$, it follows that $d$ is not attacked at all. 
  
%
%
\end{example}

Given a (sequent-based) framework, Dung-style semantics~\cite{Dung95} can be applied to it, to determine what combinations of arguments (called \emph{extensions}) can collectively be accepted from it. 

\begin{definition} 
\label{def:extension}
Let $\AFLS =\tuple{\ArgLS,\ATtack}$ be an argumentation framework and $\sfS\subseteq\ArgLS$ a set of arguments.  $\sfS$ \emph{attacks\/} an argument $a$ if there is an $a'\in\sfS$ such that $(a',a)\in\ATtack$; $\sfS$ \emph{defends\/} an argument $a$ if $\sfS$ attacks every attacker of $a$; $\sfS$ is \emph{conflict-free\/} if there are no arguments $a_1,a_2\in\sfS$ such that $(a_1,a_2)\in\ATtack$; $\sfS$ is \emph{admissible\/} if it is conflict-free and it defends all of its elements. An admissible set that contains all the arguments that it defends is a \emph{complete extension\/} of $\AFLS$. 

Some particular complete extensions of $\AFLS$ are: a \emph{preferred extension\/} of $\AFLS$ is a maximal (with respect to $\subseteq$) complete extension of $\ArgLS$; a \emph{stable extension\/} of $\ArgLS$ is a complete extension that attacks every argument not in it; the \emph{grounded extension\/} of $\AFLS$ is the minimal (with respect to $\subseteq$) complete extension of $\ArgLS$.
\end{definition}

We denote by $\exts_{\sem}(\AFLS)$ the 
set of all the extensions of $\AFLS$ under the semantics $\sem\in\{\cmp,\grd,\prf,\stb\}$. The subscript is 
omitted when this is clear from the context or arbitrary. 

\begin{definition}
\label{def:entailment}
Given a sequent-based argumentation framework $\AFLS$, the semantics as defined in 
Definition~\ref{def:extension} induces corresponding (nonmonotonic) \emph{entailment relations}:\footnote{Since the grounded extension is unique, $\nc^\cap_{{\sf L},\grd}$, $\nc^\cup_{{\sf L},\grd}$ and $\nc^\Cap_{\sfL,\grd}$, are the same, and will be denoted by $\nc_{{\sf L},\grd}$.} 
\begin{itemize}
   \item $\calS\:\nc^\cap_{{\sf L},\sem}\:\phi$ ($\calS\:\nc^\cup_{{\sf L},\sem}\:\phi$) iff for every (some) extension 
           $\ext\in\exts_{\sem}(\AFLS)$, there is an argument $\Gamma\Ra\phi\in\ext$ for $\Gamma\subseteq\calS$,
   \item $\calS\:\nc^\Cap_{{\sf L},\sem}\:\phi$ iff for every $\ext\in\exts_\sem(\AFLS)$ there is an $a\in\ext$ and 
           $\conc(a) = \phi$.  
\end{itemize}
\end{definition}

\begin{example}
  \label{ex:seqargumentationentailment}
  Consider the framework from Example~\ref{ex:seqargumentation}, for $\calS = \{p, p\impl q, \neg q\}$ and undercut as the only attack rule.  The argument $d = \ \Ra q\vee\neg q$ is not attacked and hence $\calS\:\nc_{\CL,\grd}\: q\vee\neg q$. For the other formulas in $\phi\in\calS$ we have that $\calS\:\nnc_{\CL,\sem}^\cap\:\phi$ and $\calS\:\nc_{\CL,\sem}^\cup\:\phi$, for $\sem\in\{\cmp,\prf,\stb\}$. 
\end{example}

\subsection{Reasoning with maximally consistent subsets}
\label{sec:MCS}

Reasoning with maximally consistent subsets is a well-known way to maintain consistency when provided with an inconsistent set of formulas~\cite{ReMa70}. First some useful notions:
\begin{definition}
  \label{def:consistency}
  Let \logic\ be a logic, with at least the connectives $\neg$ and $\wedge$ (see Example~\ref{ex:seqelimrule}) and let $\calT$ be a set of $\calL$-formulas. 
  \begin{itemize}
    \item The \emph{closure} of $\calT$ is denoted by $\CN(\calT)$ (thus, $\CN(\calT) = \{\phi\mid\calT\vdash\phi\}$).
    \item $\calT$ is \emph{consistent} (for $\vdash$), if there are no formulas $\phi_1,\ldots, \phi_n\in\calT$ such that $\vdash\neg\bigwedge_{i=1}^n\phi_i$.
    \item A subset $\calC$ of $\calT$ is a \emph{minimal conflict} of $\calT$ (w.r.t $\vdash$), if $\calC$ is inconsistent and for any $c\in\calC$, $\calC\setminus\{c\}$ is consistent. $\free(\calT)$ denotes the set of formulas in $\calT$ that are not part of any minimal conflict of $\calT$. 
  \end{itemize}
\end{definition}

Denote by $\MCS_\sfL(\calS)$ the set of all maximally consistent subsets of $\calS$ for the logic $\sfL$. The subscript is omitted when arbitrary or clear from the context. 

\begin{definition}
  \label{def:MCSentailment}
  Let \logic\ and $\calS$ a set of $\calL$-formulas. Several entailment relations are then defined as follows:
  \begin{itemize}
    \item $\calS\:\ncmcs^\cap\:\phi$ iff $\phi\in\CN(\bigcap\MCS(\calS)$;
    \item $\calS\:\ncmcs^\cup\:\phi$ iff $\phi\in\bigcup_{\calT\in\MCS(\calS)}\CN(\calT)$;
    \item $\calS\:\ncmcs^\Cap\:\phi$ iff $\phi\in\bigcap_{\calT\in\MCS(\calS)}\CN(\calT)$.
  \end{itemize}
\end{definition}

\begin{example}
  \label{ex:MCSflat}
  Consider the set $\calS = \{p, p\impl q, \neg q\}$ and core logic \CL. Then there are three maximally consistent subsets: $\MCS(\calS) = \{\{p, p\impl q\}, \{p, \neg q\}, \{p\impl q, \neg q\}\}$. Hence $\bigcap\MCS(\calS) = \emptyset$. Moreover, $\calS\:\ncmcs^\cap\:\phi$ and $\calS\:\ncmcs^\Cap\:\phi$ if and only if $\phi$ is a \CL-tautology. But $\calS\:\ncmcs^\cup\:\psi$, for $\psi\in\CN(\calS)$. 
\end{example}

Recently it was shown that sequent-based argumentation is a useful platform to incorporate reasoning with maximally consistent subsets~\cite{ABS17AI,ArStr16KR}. It was shown, for $\AFLS = \tuple{\ArgLS,\ATtack}$, classical logic as core logic, undercut as attack rule and $\calS$ a set of formulas that $\calS\:\nc_\prf^\pi\:\phi$ iff $\calS\:\nc_\stb^\pi\:\phi$ iff $\calS\:\ncmcs^\pi\:\phi$, where $\pi\in\{\cap,\cup,\Cap\}$. 
Indeed, the results from Examples~\ref{ex:seqargumentationentailment} and~\ref{ex:MCSflat} are the same. 

\section{Assumptive sequent-based argumentation}
\label{sec:AssumptiveSeq}

Sometimes deriving conclusions requires making assumptions, for example, because there is simply not enough information given, or the information provided is conflicting. There are many ways in which assumptions are handled in the literature, e.g., default logic~\cite{Rei80}, assumption-based argumentation~\cite{BDKT97}, default assumptions~\cite{Mak03} and adaptive logics~\cite{Bat07}. In this section we extend the sequent-based argumentation framework from the previous section, to incorporate assumptions. This generalization is formulated in a general way: independent of the core logic, the nature of the assumptions, or the way that the system allows for deriving conclusions based on these assumptions. 

In what follows we assume that, instead of one set of formulas, the input contains two sets of $\calL$-formulas: $\AS$, a set of, possibly conflicting, assumptions or defeasible premises, the form of which depends on the application and the logic; and $\calS$, a consistent set, the formulas of which can intuitively be understood as facts or strict premises. As before, we assume that a logic \logic\ has a corresponding sequent calculus \calc. This calculus will, depending on the application, be extended to \calc$'$, in order to allow for assumptions.  

\begin{definition}
  Let \logic\ be a logic, with a corresponding sound and complete sequent calculus \calc\ and sequent calculus extension \calc$'$, let $\calS$ be a consistent set of $\calL$-formulas and $\AS$ a set of assumptions.
  \begin{itemize}
    \item An \emph{assumptive $\calL$-sequent} (\emph{(assumptive) sequent} for short) is a sequent $A\acom \Gamma\Ra\Delta$.
    \item An \emph{assumptive $\sfL$-argument} (\emph{(assumptive) argument} for short) is an assumptive sequent $A\acom\Gamma\Ra\Delta$, that is provable in \calc$'$.\footnote{Often, \calc$'$ will be the result of adding rules, to divide the support set of each argument into the set of defeasible premises on the left-hand-side and the set of strict premises on the right-hand-side of $\acom$, to \calc.} 
    \item An \emph{assumptive $\sfL$-argument based on $\calS$ and $\AS$} is an assumptive argument $A\acom\Gamma\Ra\Delta$ such that $\Gamma\subseteq\calS$ and $A\subseteq\AS$. As before, we denote by $\ArgLSAS$ the set of all the assumptive $\sfL$-arguments based on $\calS$ and $\AS$.
  \end{itemize}
  \label{def:assargument}
\end{definition}


\begin{notation}
  Let $a = A\acom\Gamma\Ra\Delta$ be an assumptive argument. Then $\ass(a) = A$ denotes the assumptions of the argument $a$. As before, $\supp(a) = \Gamma$ and $\conc(a) = \Delta$. Furthermore, for $\sfS$ a set of arguments, $\concs(\sfS) = \{\conc(a)\mid a\in\sfS\}$, $\supps(\sfS) = \bigcup\{\supp(a)\mid a\in\sfS\}$ and $\ass(\sfS) = \bigcup\{\ass(a)\mid a\in\sfS\}$. In case that $A = \emptyset$, $a$ will sometimes be written as $\Gamma\Ra\Delta$.
  \label{not:asssupcon}
\end{notation}

An important rule in sequent calculi is $[\text{Cut\/}]$. In assumptive notation there are two:
\begin{gather*}
  \infer[{[\text{Cut\/}]}]{A_1,A_2\acom\Gamma_1,\Gamma_2\Ra\Delta_1,\Delta_2}{
    A_1\acom\Gamma_1\Ra\Delta_1,\phi & A_2\acom\Gamma_2,\phi\Ra\Delta_2
  }
  \qquad 
  \infer[{[\text{Cut\/}]}]{A_1,A_2\acom\Gamma_1,\Gamma_2\Ra\Delta_1,\Delta_2}{
    A_1\acom\Gamma_1\Ra\Delta_1,\phi & A_2,\phi\acom\Gamma_2\Ra\Delta_2
  }
\end{gather*}

Let $a = A\acom\Gamma\Ra\Delta$ be an argument. We continue using $\overline{a}$ and $A\acom\Gamma\not\Ra\Delta$ to denote that $a$ has been eliminated. Arguments are attacked in the set of assumptions, we give an example in the next section. Although many details are still missing, it is already possible to define assumptive sequent-based argumentation frameworks.

\begin{definition}
An \emph{assumptive sequent-based argumentation framework} for a set of formulas $\calS$, set of assumptions $\AS$, based on a logic \logic\ and a set $\ARule$ of sequent elimination rules, is a pair ${\sf AF}_{\sfL,\ARule}^\AS = \tuple{\ArgLSAS,\ATtack}$, where $\ATtack\subseteq\ArgLSAS\times\ArgLSAS$ and $(a_1,a_2)\in\ATtack$ iff there is an $\calR\in\ARule$ such that $a_1$ $\calR$-attacks $a_2$. 
\label{def:assseqAF}
\end{definition}

Like before, when these are clear from the context or arbitrary, we will omit the subscripts $\sfL$, $\ARule$ and/or $\AS$. 
The semantics, as defined in Definition~\ref{def:extension} can be applied to assumptive sequent-based argumentation frameworks. The corresponding entailment relations (from Definition~\ref{def:entailment}) are denoted by $\ncass{\sem}^\pi$ for $\pi\in\{\cap,\cup,\Cap\}$. 

%
%
%
%
%
%
\subsection{Maximally consistent subsets with assumptions}

To reflect the different premise sets in an assumptive framework $\AFLSAS$, we define $\MCS_\sfL(\calS,\AS)$.  Then $\calT\in\MCS_\sfL(\calS,\AS)$ iff $\calT\subseteq\AS$ and there is no $\calT\subset\calT'\subseteq\AS$ such that $\calT'\cup\calS$ is consistent. Thus, $\MCS_\sfL(\calS,\AS)$ is the set of all maximally consistent subsets of $\AS$ that are consistent with $\calS$. The entailment relations are adjusted as follows:

\begin{definition}
  \label{def:MCSassentailment}
  Let \logic, $\calS$ a consistent set of $\calL$-formulas and $\AS$ a set of assumptions. 
  \begin{itemize}
    \item $\calS\:\ncmcs^{\cap, \AS}\:\phi$ iff $\phi\in\CN(\bigcap\MCS(\calS,\AS)\cup\calS)$;
    \item $\calS\:\ncmcs^{\cup,\AS}\:\phi$ iff $\phi\in\bigcup_{\calT\in\MCS(\calS,\AS)}\CN(\calS\cup\calT)$;
    \item $\calS\:\ncmcs^{\Cap,\AS}\:\phi$ iff $\phi\in\bigcap_{\calT\in\MCS(\calS,\AS)}\CN(\calS\cup\calT)$.
  \end{itemize}
\end{definition}


A well-known approach in argumentation theory, in which defeasible assumptions play an essential role, is assumption-based argumentation (ABA)~\cite{BDKT97,DKT09,Toni14}. In the next section we show how ABA can be implemented in the introduced general framework.

\section{Incorporating ABA}
\label{sec:ABA}

Assumption-based argumentation (ABA) was introduced in~\cite{BDKT97}, see also~\cite{DKT09,Toni14}. It takes as input a formal deductive system, a set of assumptions and a contrariness mapping for each assumption. There are only few requirements placed on each of these, keeping the framework abstract on the one hand, while the arguments have a formal structure and the attacks are based on the latter. First some of the most important definitions for the ABA-framework, from~\cite{BDKT97}:

\begin{definition}
  \label{def:deductivesystem}
  A \emph{deductive system} is a pair $\tuple{\calL,\calR}$, where $\calL$ is a formal language and $\calR$ is a set of rules of the form $\phi_1,\ldots,\phi_n\rightarrow\phi$, for $\phi_1,\ldots,\phi_n,\phi\in\calL$ and $n\geq 0$. 
\end{definition}

\begin{definition}
  \label{def:ABAdeduction}
  A \emph{deduction} from a theory $\Gamma$ is a sequence $\psi_1,\ldots,\psi_m$, where $m>0$, such that for all $i=1,\ldots, m$, $\psi_i\in\Gamma$, or there is a rule $\phi_1,\ldots\phi_n\rightarrow\psi_i\in\calR$ with $\phi_1,\ldots,\phi_n\in\{\psi_1,\ldots,\psi_{i-1}\}$. We denote by $\Gamma\vdash^\calR\psi_m$ a deduction from $\Gamma$ using rules in $\calR$. It is assumed that $\Gamma$ is $\subseteq$-minimal.
\end{definition}

\begin{example}
  \label{ex:ABAdeductionseq}
  An example of a deductive system is classical logic, where $\phi_1,\ldots, \phi_n\rightarrow\phi\in\calR_\CL$ if and only if $\phi_1,\ldots, \phi_n\vdash_\CL\phi$. Thus, we have that $\Gamma\vdash^\calR\psi$ if and only if $\Gamma\vdash_\CL\psi$ (modulo minimality). 
\end{example}

From this ABA argumentation frameworks can be defined:

\begin{definition}
  \label{def:ABAframework}
  An \emph{ABA-framework} is a tuple $\AFABA=\tuple{\calL, \calR, \calS, \calA, \overline{\cdot}}$ where:
  \begin{itemize}
    \item $\tuple{\calL,\calR}$ is a deductive system;
    \item $\calS\subseteq\calL$ a set of formulas, that satisfies non-triviality ($\calS\nvdash^\calR\phi$ for all $\phi$ that do not share an atom with any of the formulas in $\calS$);\footnote{In the remainder, if a set of formulas $\calS$ satisfies non-triviality, it is said that $\calS$ is non-trivializing.}
    \item $\calA\subseteq\calL$ a non-empty set of \emph{assumptions} for which $\calS\cap\calA = \emptyset$; and
    \item $\overline{\cdot}$ a mapping from $\calA$ into $\calL$, where $\overline{\phi}$ is said to be the \emph{contrary} of $\phi$. 
  \end{itemize}
\end{definition}

A simple way of defining contrariness in the context of classical logic is by $\overline{\phi} = \neg\phi$. 

\begin{definition}
  \label{def:ABAconsistency}
  Given an ABA-framework $\AFABA$, a set $A\subseteq\calA$ is:
  \begin{itemize}
    \item \emph{consistent} iff there is no $\phi\in A$ such that $A',\Gamma\vdash^\calR\overline{\phi}$ for some $A'\subseteq A$ and some $\Gamma\subseteq\calS$;
    \item \emph{maximally consistent} iff there is no $A'$ such that $A\subset A'\subseteq\calA$ and $A'$ is consistent, then $A\in\MCS(\calS,\calA)$. 
  \end{itemize}  
%
%
  The \emph{closure} of $\calT\subseteq\calL$ is defined as $\CN(\calT) = \{\phi\mid \Gamma\vdash^\calR\phi\text{ for }\Gamma\subseteq\calT\}$. 
\end{definition}

ABA-arguments are defined in terms of deductions and an attack is on the assumptions of the attacked argument. As in~\cite{DKT09}, arguments are not required to be consistent.  

\begin{definition}
  \label{def:ABAargument}
  Let $\AFABA=\tuple{\calL, \calR, \calS, \calA, \overline{\cdot}}$. An \emph{ABA-argument} for $\phi\in\calL$ is a deduction $A\cup\Gamma\vdash^\calR\phi$, where $A\subseteq\calA$ and $\Gamma\subseteq\calS$. The set $\ArgLRSA$ denotes the set of all ABA-arguments for $\calS$ and $\calA$.
\end{definition}

\begin{definition}
  \label{def:ABAattack}
  Let $\AFABA=\tuple{\calL, \calR, \calS, \calA, \overline{\cdot}}$. An argument $A\cup\calS\vdash^\calR\phi$ \emph{attacks} an argument $A'\cup\calS\vdash^\calR\phi'$ iff $\phi = \overline{\psi}$ for some $\psi\in A'$.
\end{definition}

The following requirement will be necessary for many of the proofs below. 

\begin{definition}
  \label{def:contrapass}
  $\vdash^\calR$ is \emph{contrapositive for assumptions}: for $\phi,\psi\in A$, $A\cup\Gamma\vdash^\calR\overline{\psi}$ if and only if $(A\setminus\{\phi\})\cup\{\psi\}\cup\Gamma\vdash^\calR\overline{\phi}$. 
\end{definition}

Semantics are defined as usual, see Definition~\ref{def:extension}. From this we can define the corresponding entailment relation:

\begin{definition}
  \label{def:ABAentailment}
  Let $\AFABA=\tuple{\calL, \calR, \calS, \calA, \overline{\cdot}}$ and $\sem\in\{\grd,\cmp,\prf,\stb\}$. 
  \begin{itemize}
    \item $\calA\cup\calS\:\ncaba{\sem}^\cup\:\phi$ ($\calA\cup\calS\:\ncaba{\sem}^\cap\:\phi$) if and only if for some (every) extension $\ext\in\exts_\sem(\AFABA)$ there is an argument $A\cup\Gamma\vdash^\calR\phi\in\ext$ for $A\subseteq\calA$ and $\Gamma\subseteq\calS$. 
    \item $\calA\cup\calS\:\ncaba{\sem}^\Cap\:\phi$ if and only if for every $\ext\in\exts_\sem(\AFABA)$ there is an $a\in\ext$ and $\conc(a) = \phi$. 
  \end{itemize}
\end{definition}

\begin{example}
  \label{ex:ABAframeworkdeduction}
  Recall the deductive system $\calR_\CL$ for classical logic, described in Example~\ref{ex:ABAdeductionseq} and let $\overline{\phi} = \{\neg\phi\}$. Consider the sets $\calS = \{s\}$ and $\calA = \{p, q, \neg p\vee\neg q, \neg p\vee r, \neg q\vee r\}$. Some of the arguments of $\AFABA$ are:\footnote{To avoid clutter, sometimes the superscript $\calR$ in $\vdash^\calR$ is omitted.} $a=s\vdash s$, $\ b=p, \neg p\vee\neg q\vdash\neg q$, $\ c = q, \neg p\vee\neg q\vdash\neg p$ and $\ d=p, q, \neg p\vee r, \neg q\vee r\vdash r.$  
 
  Note that $a$ cannot be attacked, since the set of assumptions of $a$ is empty. For the other arguments, we have that $b$ attacks $c$ and $d$, and $c$ attack $b$ and $d$. It can be shown that $\calA\cup\calS \:\ncaba{\sem}^\pi\:s$, for $\pi\in\{\cap, \cup, \Cap\}$, $\sem\in\{\grd, \cmp,\prf, \stb\}$. Furthermore, $\calA\cup\calS\:\ncaba{\sem}^\cup\:\phi$, but $\calA\cup\calS\:\nncaba{\sem}^\cap\:\phi$ and $\calA\cup\calS\:\nncaba{\sem}^\Cap\:\phi$ for $\sem\in\{\cmp,\prf, \stb\}$ and $\phi\in\{p, q, \neg p\vee\neg q\}$.
\end{example}

Based on the above notions from assumption-based argumentation, a corresponding sequent-based ABA-framework can be defined: 

\begin{definition}
  \label{def:seqABAframework}
  Let $\AFABA = \tuple{\calL, \calR, \calS, \calA, \overline{\cdot}}$ be an ABA-framework as defined above. The corresponding \emph{sequent-based ABA-framework} is then $\calAF_{\tuple{\calL,\calR_\Ra}}^{\sf ABA_\Ra}(\calS,\calA) =\tuple{\ArgLRseq(\calS,\calA),\ATtack}$, where:
  \begin{itemize}
    \item $\calR_\Ra$ is defined as:
    \begin{itemize}
      \item if $\tuple{\calL,\calR}$ is a logic with corresponding sequent calculus \calc, $\calR_\Ra = \mathsf{C}\cup\{\text{AS}_{\sf ABA}\}$ such that:
      \begin{small}
      \[
        \qquad\infer[\text{AS}_{\sf ABA}]{A,\phi\acom\Gamma\Ra\psi}{
          A\acom\Gamma,\phi\Ra\psi
        }
        \qquad
        \infer[\text{AS}_{\sf ABA}]{A\acom\Gamma,\phi\Ra\psi}{
          A,\phi\acom\Gamma\Ra\psi
        }
        \qquad
        \text{ where }\phi\in\calA.
      \]
      \end{small}
      \item otherwise $\calR_\Ra = \{\mu(r)\mid r\in\calR\}\cup\{\text{AS}_{\sf ABA},[\text{Cut\/}],[{\sf id}]\}$ where, for each $r = \phi_1,\ldots,\allowbreak \phi_n\rightarrow\phi\in\calR$, $\mu(r) = \phi_1,\ldots, \phi_n\Ra\phi$ and:     
      $
        \frac{}{\phi\Ra\phi}[\textsf{id}] 
      $ 
    \end{itemize}
    
    \item $a = A\acom\Gamma\Ra\phi\in\ArgLRseqSA$ for $A\subseteq\calA$, $\Gamma\subseteq\calS$ iff there is a derivation of $a$ using rules in $\calR_\Ra$.
    
    \item $(a_1, a_2)\in\ATtack$ iff $a_1$ $\calR$-attacks $a_2$ as defined in Definition~\ref{def:seqAF}, for $\ARule = \{\text{AT}_{\sf ABA}\}$ and:
    \begin{equation}
      \infer[\text{AT}_{\sf ABA}]{A_2,\overline{\phi}\acom\Gamma_2\not\Ra\psi}{
        A_1\acom\Gamma_1\Ra\overline{\phi} & A_2,\phi\acom\Gamma_2\Ra\psi
      }
    \end{equation}
  \end{itemize}
\end{definition}

\begin{remark}
  \label{rem:acomseq}
  $A\cup\Gamma\Ra\phi$ is derivable iff $A\acom\Gamma\Ra\phi$ is derivable. 
%
\end{remark}

%

In the next example we show how classical logic, with corresponding sequent calculus \LK\ can be taken as underlying deductive system. 

\begin{example}
  \label{ex:ABAseqCL}
  Let $\CL = \tuple{\calL,\vdash}$, where $\overline{\phi} = \neg \phi$ and $\calR_\Ra = \LK$. According to Definition~\ref{def:assargument} $A\acom\Gamma\Ra\phi\in\ArgL(\calS,\calA)$ iff $\Gamma\cup A\Ra\phi$ is derivable in \LK, for some finite $A\subseteq\calA$ and $\Gamma\subseteq\calS$. Since $\calR_\Ra = \LK\cup\{\text{AS}_{\sf ABA}\}$ it follows immediately that $A\cup\Gamma\Ra\phi$ is derivable in $\calR_\Ra$ iff it is derivable in \LK. 
  
%
\end{example}

In what follows let $\tuple{\calL,\calR}$ be a deductive system, $\calS\subseteq\calL$ a non-trivializing set of formulas and $\calA\subseteq\calL$ a set of assumptions, such that $\Gamma\subseteq\calS$ and $A\subseteq\calA$ are finite. Let $\AFseqABA =\tuple{\ArgLRseq(\calS,\calA),\ATtack}$ be a sequent-based ABA-framework and $\AFABA = \tuple{\calL, \calR, \calS, \calA,\overline{\cdot}}$. 

\begin{proposition}
  \label{prop:ABAsem}
  $\calA\cup\calS\:\ncaba{\sem}^\pi\:\phi$ iff $\calA\cup\calS\:\ncass{\sem}^\pi\:\phi$ for $\sem\in\{\grd, \cmp,\prf,\stb\}$ and $\pi\in\{\cup,\cap,\Cap\}$. 
\end{proposition}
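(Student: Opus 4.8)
The plan is to show that $\AFABA$ and $\AFseqABA$ are, up to a harmless redundancy in their argument sets, the same abstract framework, so that their extensions — and hence the induced entailments — coincide for every $\sem\in\{\grd,\cmp,\prf,\stb\}$ and every $\pi\in\{\cup,\cap,\Cap\}$. Since the notions of Definition~\ref{def:extension} depend only on the attack relation, it suffices to establish (i) a correspondence between the arguments of the two frameworks, (ii) a correspondence between their attacks, and then (iii) to lift this correspondence to complete, grounded, preferred and stable extensions in a way that preserves which conclusions are accepted.

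For (i) I would use Remark~\ref{rem:acomseq}, which lets me treat $A\cup\Gamma\Ra\phi$ and $A\acom\Gamma\Ra\phi$ interchangeably, and then compare derivability in $\calR_\Ra$ with the ABA-deduction $A\cup\Gamma\vdash^\calR\phi$ of Definition~\ref{def:ABAargument}. When $\tuple{\calL,\calR}$ already carries a sound and complete sequent calculus this is essentially immediate, since $\calR_\Ra$ only adds $\text{AS}_{\sf ABA}$, which merely relocates assumption-formulas across $\acom$ (cf.\ Example~\ref{ex:ABAseqCL}). In the general case $\calR_\Ra$ replaces each rule $r=\phi_1,\dots,\phi_n\rightarrow\phi$ by $\mu(r)=\phi_1,\dots,\phi_n\Ra\phi$ and adds $[\textsf{id}]$, $[\text{Cut}]$ and $\text{AS}_{\sf ABA}$; a routine induction then shows that a $\calR_\Ra$-derivation of $A\acom\Gamma\Ra\phi$ can be read off an ABA-deduction $A\cup\Gamma\vdash^\calR\phi$ and conversely, with $[\text{Cut}]$ and $[\textsf{id}]$ implementing the chaining and reflexivity built into Definition~\ref{def:ABAdeduction}. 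This identifies each ABA-argument with a sequent-based ABA-argument of the same assumptions, support and conclusion.

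For (ii) I would compare $\text{AT}_{\sf ABA}$ with Definition~\ref{def:ABAattack}: in both, one argument attacks another exactly when its conclusion is the contrary $\overline{\phi}$ of some assumption $\phi$ of the attacked argument, and this depends only on $\conc(\cdot)$ of the attacker and $\ass(\cdot)$ of the target. As the correspondence from (i) preserves both data, $a_1$ ABA-attacks $a_2$ iff the corresponding sequents $\text{AT}_{\sf ABA}$-attack. One point I would check carefully here is that the eliminated sequent in $\text{AT}_{\sf ABA}$ carries $\overline{\phi}$, not $\phi$, in its assumption slot: on the natural reading that the argument $A_2,\phi\acom\Gamma_2\Ra\psi$ is the one being eliminated the rule matches Definition~\ref{def:ABAattack} directly, whereas if the flip is genuine then the contrapositivity condition of Definition~\ref{def:contrapass} is presumably what reconciles the two.

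The main obstacle is the minimality mismatch in (iii): Definition~\ref{def:ABAdeduction} requires the premises of an ABA-argument to be $\subseteq$-minimal, while Definition~\ref{def:seqABAframework} admits every derivable $A\acom\Gamma\Ra\phi$, including those with non-minimal $A\cup\Gamma$. I would resolve this by a domination argument. A non-minimal sequent argument $b=A\acom\Gamma\Ra\phi$ has a sub-argument $a$ that is $\subseteq$-minimal for $\phi$ and has the same conclusion; since attacks issue only from conclusions, $b$ attacks exactly what $a$ attacks, while $\ass(a)\subseteq\ass(b)$ means every attacker of $a$ also attacks $b$. Hence any complete extension that contains $b$ defends $b$, therefore defends $a$, therefore contains $a$, and $b$ contributes no attack that $a$ does not. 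Consequently $\ext_{seq}\mapsto\ext_{seq}\cap\{\text{minimal arguments}\}$ is an inclusion-preserving bijection between the complete extensions of $\AFseqABA$ and those of $\AFABA$, with inverse re-adding the dominated defended arguments; this transports the minimality, maximality and attacks-all-others conditions defining the grounded, preferred and stable cases (for $\stb$ note that stability itself already forces $\ext_{seq}$ to attack every dominated argument it omits). Finally, because every entailment in Definitions~\ref{def:ABAentailment} and~\ref{def:entailment} is stated in terms of the conclusions present in extensions, and an accepted non-minimal argument always co-occurs with a minimal core of the same conclusion, the equivalence transfers to all three of $\cup$, $\cap$ and $\Cap$, completing Proposition~\ref{prop:ABAsem}.
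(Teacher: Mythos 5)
Your proposal follows, at its core, the same route as the paper: the paper derives Proposition~\ref{prop:ABAsem} as an immediate corollary of Lemma~\ref{lem:ABAvsSeqargu} (your step (i), proved there by the same pair of inductions, with $[\text{Cut}]$ and $[\textsf{id}]$ simulating the chaining and reflexivity built into ABA-deductions) and Lemma~\ref{lem:ABAvsSeqattack} (your step (ii)); since both entailment relations are defined from extensions of the respective attack graphs, these two correspondences are taken to yield the claim directly. Where you genuinely go beyond the paper is your step (iii). The paper has no analogue of your domination argument: Lemma~\ref{lem:ABAvsSeqargu} is stated as an exact ``iff'' between $\ArgLRSA$ and $\ArgLRseqSA$, which silently ignores the $\subseteq$-minimality imposed on ABA-deductions in Definition~\ref{def:ABAdeduction} (the issue the paper waves at with ``modulo minimality'' in Example~\ref{ex:ABAdeductionseq}). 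As you note, the mismatch is real: with $\calR_\Ra = \LK$, the sequent $p,q\acom\ \Ra p$ is derivable by weakening and so lies in $\ArgLRseqSA$, but it corresponds to no ABA-argument, so the two frameworks are not literally isomorphic. Your observation that an argument and its minimal core have the same conclusion (hence issue the same attacks) while the core has no more attackers, and the resulting inclusion-preserving bijection $\ext\mapsto\ext\cap\{\text{minimal arguments}\}$ on complete extensions (with the separate check for stability), is precisely what is needed to make the paper's ``corollary'' step rigorous and to transfer $\cup$, $\cap$ and $\Cap$ at the level of conclusions; so your proof buys correctness at exactly the point where the paper is sloppiest. Your side remark on $\text{AT}_{\sf ABA}$ is also resolved the way you first suggest: by Definitions~\ref{def:elim} and~\ref{def:seqAF} the attack relation pairs the first premise with the last premise of the rule, so the $\overline{\phi}$ in the eliminated sequent is best read as a typo, and contrapositivity (Definition~\ref{def:contrapass}) is not needed here --- the paper uses it only for the maximal-consistency results, not for Lemma~\ref{lem:ABAvsSeqattack}.
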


The above proposition is a corollary of the following two lemmas:

\begin{lemma}
  \label{lem:ABAvsSeqargu}
  $A\cup\Gamma\vdash^\calR\phi\in\ArgLRSA$ iff $A\acom\Gamma\Ra\phi\in\ArgLRseq(\calS,\calA)$.
\end{lemma}

\begin{proof}
  Consider both directions:
  \begin{itemize}
    \item[$\Ra$] 
Assume that $A\cup\Gamma\vdash^\calR\phi\in\ArgLRSA$. Then there is a deduction from the theory $A\cup\Gamma$ for the formula $\phi$. By Definition~\ref{def:ABAdeduction}, there is a sequence $\psi_1,\ldots,\psi_m$ ($\psi_m = \phi$), such that for each $i=1,\ldots,m$, $\psi_i\in A\cup\Gamma$ or there is a rule $\phi_1,\ldots,\phi_n\rightarrow\psi_i=r\in R$ and $\phi_1,\ldots,\phi_n\in\{\psi_1,\ldots,\psi_{i-1}\}$. 
We proceed by induction on $m$, showing that for each $\psi_i$, there is a sequent $s_i = A_i\cup\Gamma_i\Ra\psi_i$:
\begin{itemize}
  \item $m=1$. Then either $\psi_1\in A\cup\Gamma$ and thus $\psi_1\Ra\psi_1$ is derivable in $\calR_\Ra$, by $[{\sf id}]$. Or there is a rule $\ \rightarrow\psi_1\in R$. Hence $\ \Ra\psi_1\in\calR_\Ra$ for  $A \cup\Gamma = \emptyset$. Since $\psi_1 = \psi_m = \phi$, $A\cup\Gamma\Ra\phi$ is derivable. 
  \item $m=k+1$. Assume that for sequences up to $k\geq 1$, for each $\psi_i$ there is a sequent $s_i = A_i\cup\Gamma_i\Ra\psi_i$. Now consider $\psi_{k+1}$. Then $\psi_{k+1} \in A\cup\Gamma$, from which it follows immediately that $A\cup\Gamma\Ra\psi_{k+1}$ is derivable in $\calR_\Ra$, or there is a rule $\phi_1,\ldots, \phi_n\rightarrow\psi_{k+1} = r\in\calR$ and $\phi_1,\ldots,\phi_n\in\{\psi_1,\ldots,\psi_k\}$. By Definition~\ref{def:seqABAframework}, $\phi_1,\ldots,\phi_n\Ra\psi_{k+1}\in\calR_\Ra$. Furthermore, by induction hypothesis, for each $\psi_i\in\{\psi_1,\ldots,\psi_k\}$, there is a sequent $s_i = A_i\cup\Gamma_i\Ra\psi_i$.  Hence, $\phi_1,\ldots,\phi_n\in\{\conc(s_1),\ldots, \conc(s_k)\}$. By applying $[\text{Cut\/}]$ we obtain a sequent $s_{k+1} = A_{k+1}\cup\Gamma_{k+1}\Ra\psi_{k+1}$. 
\end{itemize}
      
    Hence, there is a sequence of sequents $s_1,\ldots,s_m$, such that $s_i$ is derived from $s_1,\ldots, s_{i-1}$ by applying rules in $\calR_\Ra$ and $s_m = A\cup\Gamma\Ra\phi$. That $A\acom\Gamma\Ra\phi\in\ArgLRseq(\calS,\calA)$ follows by Remark~\ref{rem:acomseq}. 
    
    \item[$\Leftarrow$] Now suppose that $a=A\acom\Gamma\Ra\phi\in\ArgLRseq(\calS,\calA)$. By Remark~\ref{rem:acomseq}, $A\cup\Gamma\Ra\phi$ is derivable in $\calR_\Ra$ as well. Then there is a derivation via a sequence of sequents $s_1,\ldots, s_m$, where $s_i = A_i\cup\Gamma_i\Ra\psi_i$ for each $i\in\{1,\ldots,m\}$ is the result of applying rules from $\calR_\Ra$ to sequents in $\{s_1,\ldots, s_{i-1}\}$ and $s_m = A\cup\Gamma\Ra\phi$. Again by induction on the length of the derivation $m$, for each $s_i$, there is a deduction $\ass(s_i) \cup\supp(s_i)\vdash^\calR\conc(s_i)$ via the sequence $\Phi_i = \psi^i_1,\ldots,\psi^i_{m_i}$:
    \begin{itemize}
      \item $m=1$. Then $\phi\in A\cup\Gamma$ in which case $s_m = \phi\Ra\phi$ or there is a $\mu(r)\in\calR_\Ra$ such that $\mu(r)= \ \Ra\phi$ and thus, by Definition~\ref{def:seqABAframework}, $r =\ \rightarrow\phi\in\calR$. Hence $A\cup\Gamma\vdash^\calR\phi$.      
      \item $m=k+1$. Now assume that for derivations up to length $k\geq 1$, for each $s_i$, there is a deduction from $\ass(s_i)\cup\supp(s_i)$ for $\conc(s_i)$ via the sequence $\Phi_i$. That $s_m$ is derivable implies that $\conc(s_m)\in\ass(s_m)\cup\supp(s_m)$, in which case $s_m = \conc(s_m)\Ra\conc(s_m)$, from which it follows immediately that there is a deduction $\ass(s_m)\cup\supp(s_m)\vdash^\calR\conc(s_m)$ or $s_m$ is the result of applying a rule to sequents in $\{s_1,\ldots, s_k\}$:
      \begin{itemize}
        \item suppose that $[\text{Cut\/}]$ was applied to $s_{j_1},s_{j_2}\in\{s_1,\ldots, s_k\}$. By induction hypothesis, there are deductions $\ass(s_{j_1})\cup\supp(s_{j_1})\vdash^\calR\conc(s_{j_1})$ and $\ass(s_{j_2})\cup\supp(s_{j_2})\vdash^\calR\conc(s_{j_2})$ via the sequence $\Phi_{j_1}$ respectively $\Phi_{j_2}$. The deduction $\ass(s_m)\cup\supp(s_m)\vdash^\calR\conc(s_m)$ is obtained via the sequence $\Phi_m = \Phi_{j_1}\circ_{\conc(s_{j_1})}\Phi_{j_2}$, where $\Phi^1\circ_\psi\Phi^2$ denotes the concatenation of $\Phi^1$ with $\Phi^2$ such that all occurrences of $\psi$ in $\Phi^2$ are taken out. 
        \item suppose that $s_m$ is the result of applying $\phi_1,\ldots, \phi_n\Ra\phi = \mu(r)\in\calR_\Ra$. By construction, $\phi_1,\ldots, \phi_n\rightarrow\phi = r\in\calR$ such that $\phi_j\in\{\psi_1,\ldots, \psi_k\}$ is obtained via a sequence $\Phi'_j$, for each $j\in\{1,\ldots, n\}$. Therefore, $\ass(s_m)\cup\supp(s_m)\vdash^\calR\conc(s_m)$.
      \end{itemize}
    \end{itemize}
    Thus, for the derivation of $a$, of any length $m$, via the sequence of sequents, $s_1,\ldots, s_m$, there is a deduction from $A\cup\Gamma$ via the sequence $\Phi_m$, for $\phi$. Hence $A\cup\Gamma\vdash^\calR\phi\in\ArgLR(\calS,\calA)$. 
  \end{itemize}
\end{proof}


\begin{lemma}
  \label{lem:ABAvsSeqattack}
  Let $a,b\in\ArgLRSA$ and $a',b'$ their corresponding ABA-sequent arguments, thus $a',b'\in\ArgLRseqSA$.\footnote{That $a'$ and $b'$ exist follows from Lemma~\ref{lem:ABAvsSeqargu}.} Then $a$ attacks $b$ in $\AFABA$ iff $a'$ attacks $b'$ in $\AFseqABA$.
\end{lemma}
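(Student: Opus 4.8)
The plan is to read off the result from the argument correspondence of Lemma~\ref{lem:ABAvsSeqargu}, which sends an ABA-argument $A\cup\Gamma\vdash^\calR\phi$ to the assumptive sequent argument $A\acom\Gamma\Ra\phi$ while keeping the assumption set $A$, the support $\Gamma$ and the conclusion $\phi$ unchanged. Writing $a = A_a\cup\Gamma_a\vdash^\calR\phi_a$ and $b = A_b\cup\Gamma_b\vdash^\calR\phi_b$, the associated sequent arguments are $a' = A_a\acom\Gamma_a\Ra\phi_a$ and $b' = A_b\acom\Gamma_b\Ra\phi_b$, so $\conc(a') = \phi_a$, $\ass(b') = A_b$, and similarly for the others. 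Recalling from Definition~\ref{def:elim} that in an elimination rule the attacker is the first premise and the attacked argument is the last, the lemma reduces to matching the ABA attack condition of Definition~\ref{def:ABAattack} against the template of the $\text{AT}_{\sf ABA}$ rule from Definition~\ref{def:seqABAframework}; since the correspondence transports conclusions and assumption sets verbatim, this matching is purely a relabelling.

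For the left-to-right direction I would assume that $a$ attacks $b$ in $\AFABA$, so by Definition~\ref{def:ABAattack} we have $\phi_a = \overline{\psi}$ for some $\psi\in A_b$. I would then instantiate $\text{AT}_{\sf ABA}$ with distinguished assumption $\phi := \psi$: its first (attacking) premise $A_1\acom\Gamma_1\Ra\overline{\phi}$ is witnessed by $a' = A_a\acom\Gamma_a\Ra\overline{\psi}$, and its last (attacked) premise $A_2,\phi\acom\Gamma_2\Ra\chi$ is witnessed by $b'$ on taking $A_2 := A_b\setminus\{\psi\}$, $\Gamma_2 := \Gamma_b$ and $\chi := \phi_b$, using that $A_b = (A_b\setminus\{\psi\})\cup\{\psi\}$ as sets. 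Both premises are bona fide members of $\ArgLRseqSA$ by Lemma~\ref{lem:ABAvsSeqargu}, so the rule applies and $a'$ attacks $b'$. The only side condition to verify is that $\overline{\psi}$ is well-defined, which holds because $\psi\in A_b\subseteq\calA$ and the contrariness map is total on $\calA$.

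For the right-to-left direction I would assume that $a'$ attacks $b'$ through $\text{AT}_{\sf ABA}$. By the shape of the rule there is a $\phi$ with $\conc(a') = \overline{\phi}$ and with $\phi$ occurring as the distinguished assumption of the attacked premise $b'$, hence $\phi\in\ass(b') = A_b$. As $\conc(a') = \phi_a$, this yields $\phi_a = \overline{\phi}$ with $\phi\in A_b$, which is exactly the ABA attack condition of Definition~\ref{def:ABAattack} with witness $\psi := \phi$; therefore $a$ attacks $b$ in $\AFABA$.

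I do not anticipate a genuine obstacle, as each direction is a change of variable names between the two attack formulations enabled by Lemma~\ref{lem:ABAvsSeqargu}. The most delicate bookkeeping is isolating the witnessing assumption from the remaining assumptions so that $A_b$ fits the $A_2,\phi$ pattern of $\text{AT}_{\sf ABA}$, and confirming that the two arguments fed into the rule really lie in $\ArgLRseqSA$ --- both of which follow directly from the argument correspondence.
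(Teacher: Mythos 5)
Your proof is correct and follows essentially the same route as the paper: both reduce the claim to the argument correspondence of Lemma~\ref{lem:ABAvsSeqargu} and then match Definition~\ref{def:ABAattack} against the $\text{AT}_{\sf ABA}$ rule. In fact you are slightly more thorough, since the paper proves only the forward direction and leaves the converse to the reader, whereas you spell out both directions and the $A_2,\phi$ bookkeeping explicitly.
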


\begin{proof}
  Consider the $\Ra$-direction, the $\Leftarrow$-direction is similar and left to the reader.

  Let $a,b\in\ArgLRSA$ and assume $a = A\cup\Gamma\vdash^\calR\phi$ attacks $b=A'\cup\Gamma'\vdash^\calR\phi'$. Then, by Definition~\ref{def:ABAattack}, $\phi = \overline{\psi}$ for $\psi\in A'$. By Lemma~\ref{lem:ABAvsSeqargu}, $a' = A\acom\Gamma\Ra\phi$ and $b' = A' \acom\Gamma'\Ra\phi'$ are arguments in $\AFseqABA$ ($a',b'\in\ArgLRseq(\calS,\calA)$). Since $\phi=\overline{\psi}$ for $\psi\in A'$, it follows that $a'$ $\text{AT}_{\sf ABA}$-attacks $b'$.
\end{proof}

\begin{example}
  \label{ex:ABAprf}
  Recall the setting from Example~\ref{ex:ABAframeworkdeduction}, in which $\calS = \{s\}$, $\calA  = \{p, q, \neg p\vee\neg q, \neg p\vee r, \neg q\vee r\}$ and classical logic the core logic. Let $\calR_\Ra = \LK$, for $\AFseqABA =\tuple{\ArgLRseqSA,\ATtack}$, some of the arguments in $\ArgLRseqSA$ are: $\ a = s\Ra s$, $\ b= p, \neg p\vee\neg q\acom\ \Ra\neg q$, $\ c = q, \neg p\vee\neg q\acom\ \Ra\neg p$ and $\ d = p,q, \neg p\vee r, \neg q\vee r\acom\ \Ra r.$
  
  Note that $a$ cannot be attacked, since $\ass(a)=\emptyset$. We thus have $\calA\cup\calS\:\ncass{\sem}^\pi\: s$ for $\sem\in\{\grd,\cmp,\prf,\allowbreak\stb\}$ and $\pi\in\{\cup,\cap,\Cap\}$. However, the argument $d$ is attacked by both $b$ and $c$. Moreover $b$ attacks $c$ and $c$ attacks $b$. It can be shown that, for $\phi\in\{p, q, \neg p\vee\neg q\}$, $\calA\cup\calS\:\nncass{\sem}^\pi\:\phi$ for $\sem\in\{\grd,\cmp,\prf,\allowbreak\stb\}$ and $\pi\in\{\cap,\Cap\}$ but also $\calA\cup\calS\:\ncass{\sem'}^\cup\:\phi$ for $\sem'\in\{\cmp,\prf,\stb\}$. 
\end{example}

The relations between ABA and reasoning with maximally consistent subsets and between sequent-based argumentation and maximally consistent subsets have been studied~\cite{ABS17IEAAIE,ABS17AI,HeyAri18}. In addition to the two entailment relations in~\cite{HeyAri18} (in the notation of this paper $\ncmcs^{\Cap,\calA}$ and $\ncmcs^{\cup,\calA}$), the entailment relation $\ncmcs^{\cap,\calA}$ is considered below as well. Moreover, the semantics as defined in this paper is based on sets of arguments, were as in~\cite{HeyAri18}, sets of assumptions make up the extensions.  
The proof of Proposition~\ref{prop:ABAmcs}, and the lemmas necessary for it, are based on proofs in~\cite{ABS17IEAAIE,ArStr16KR}. 

\begin{proposition}
  \label{prop:ABAmcs}
  Let $\AFseqABA$ for a deductive system $\tuple{\calL,\calR}$, $\calS\subseteq\calL$ a non-trivializing set of formulas and $\calA$ a set of assumptions. Then: $\calA\cup\calS\ncass{\prf}^\pi\:\phi$ iff $\calA\cup\calS\ncass{\stb}^\pi\:\phi$ iff $\calS\:\ncmcs^{\pi, \calA}\:\phi$, for $\pi\in\{\cap,\cup,\Cap\}$.
\end{proposition}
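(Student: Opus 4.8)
The plan is to pin down the stable and preferred extensions of $\AFseqABA$ completely, and then read the three entailments off this description. For $\calT\subseteq\calA$ let $\ext_\calT=\{a\in\ArgLRseqSA\mid\ass(a)\subseteq\calT\}$ be the arguments whose assumptions come from $\calT$. The core claim I would establish is that $\exts_\stb(\AFseqABA)=\exts_\prf(\AFseqABA)=\{\ext_\calT\mid\calT\in\MCS(\calS,\calA)\}$. Granting this, $\ncass{\prf}^\pi$ and $\ncass{\stb}^\pi$ coincide for every $\pi$ simply because the two semantics yield the same family of extensions, so the only remaining work is to match this family against $\ncmcs^{\pi,\calA}$.

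I would first record two facts about the attack structure. Since $\text{AT}_{\sf ABA}$ attacks an argument only through one of its assumptions — an attacker of $a$ is exactly an argument concluding $\overline\psi$ for some $\psi\in\ass(a)$ — a set $\{a\mid\ass(a)\subseteq A\}$ is conflict-free iff no argument with assumptions in $A$ and support in $\calS$ derives the contrary of an assumption in $A$, i.e.\ iff $A$ is consistent in the sense of Definition~\ref{def:ABAconsistency}. Moreover, if $\ext$ is complete and $\calT=\ass(\ext)$ is consistent, then $\ext=\ext_\calT$: the inclusion $\ext\subseteq\ext_\calT$ is immediate, and for the reverse, any attacker of an $a$ with $\ass(a)\subseteq\calT$ targets some $\psi\in\calT=\ass(\ext)$, hence also attacks a witnessing argument in $\ext$, so $\ext$ defends $a$ and completeness places $a$ in $\ext$.

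Next I would show that each $\ext_\calT$ with $\calT\in\MCS(\calS,\calA)$ is stable and, conversely, that every preferred extension is of this form. Conflict-freeness of $\ext_\calT$ is the first fact. For stability, given $b\notin\ext_\calT$ pick $\psi\in\ass(b)\setminus\calT$; maximality makes $\calT\cup\{\psi\}$ inconsistent with $\calS$ while $\calT$ stays consistent, and here \emph{contrapositivity for assumptions} (Definition~\ref{def:contrapass}) is exactly what converts this into a deduction $\calT,\Gamma\vdash^\calR\overline\psi$ with $\Gamma\subseteq\calS$, supplying an argument in $\ext_\calT$ that attacks $b$. For the converse, let $\ext$ be preferred; chasing attackers through defended witnesses as in the second fact shows that $\calT=\ass(\ext)$ cannot be inconsistent, so $\ext=\ext_\calT$, and $\calT$ must be $\subseteq$-maximal, since otherwise completing a larger consistent set to an MCS would give a stable — hence admissible — extension strictly containing $\ext$. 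As every stable extension is preferred and every $\ext_\calT$ is stable, the three families collapse into one. I expect the stability direction to be the main obstacle: it is the only place where Definition~\ref{def:contrapass} is indispensable, and the degenerate case of a self-undermining $\psi$ (where $\psi,\Gamma\vdash^\calR\overline\psi$) must be treated separately.

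Finally I would compute $\concs(\ext_\calT)=\CN(\calS\cup\calT)$ for each $\calT\in\MCS(\calS,\calA)$, which is immediate from the closure of Definition~\ref{def:ABAconsistency}, since the conclusions of arguments with assumptions in $\calT$ and support in $\calS$ are precisely the $\vdash^\calR$-consequences of subsets of $\calS\cup\calT$. Reading the entailments against the characterization then settles all three cases: $\pi=\cup$ gives $\phi\in\bigcup_{\calT}\CN(\calS\cup\calT)=\ncmcs^{\cup,\calA}$; $\pi=\Cap$ gives $\phi\in\bigcap_{\calT}\CN(\calS\cup\calT)=\ncmcs^{\Cap,\calA}$; and for $\pi=\cap$, an argument for $\phi$ lying in every $\ext_\calT$ must have assumptions in $\bigcap_{\calT}\calT=\bigcap\MCS(\calS,\calA)$ and support in $\calS$, so its existence is equivalent to $\phi\in\CN(\calS\cup\bigcap\MCS(\calS,\calA))=\ncmcs^{\cap,\calA}$, completing the chain of equivalences.
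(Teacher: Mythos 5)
Your proposal is correct and takes essentially the same route as the paper: your two preliminary ``facts'' and the stability/maximality arguments are exactly the content of the paper's Lemmas~\ref{lem:consistencyMCS}--\ref{lem:ABAPrfMCS} (conflict-freeness/consistency of assumption sets, the contrapositivity-based attack construction for stability of $\ArgLRseq(\calS,\calT)$ with $\calT\in\MCS(\calS,\calA)$, and the preferred-to-MCS direction), and your final reading-off of the three entailments — including treating $\nc^\cap$ as the existence of a single argument in the intersection of all extensions — matches the paper's case analysis for $\pi\in\{\cap,\cup,\Cap\}$. The only difference is presentational: you first assemble the lemmas into the single characterization $\exts_\stb = \exts_\prf = \{\ArgLRseq(\calS,\calT)\mid\calT\in\MCS(\calS,\calA)\}$ and then derive all three equivalences uniformly, whereas the paper keeps the lemmas separate and argues each $\pi$ and each direction individually; you are also somewhat more explicit than the paper about the self-undermining edge case (where $\psi,\Gamma\vdash^\calR\overline{\psi}$), which the paper's Lemma~\ref{lem:consistencyMCS} passes over silently.
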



\begin{lemma}
  \label{lem:consistencyMCS}
  For each set $\calT\subseteq\calA$: $\calT\in\MCS(\calS,\calA)$ iff for each $\phi\in\calA\setminus\calT$, there is some finite $A\subseteq\calT$ and some finite $\Gamma\subseteq\calS$ such that $A\acom\Gamma\Ra\overline{\phi}\in\ArgLRseqSA$. 
\end{lemma}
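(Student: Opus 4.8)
The plan is to first strip away the sequent layer and reduce the claim to a statement purely about ABA-deductions. By Lemma~\ref{lem:ABAvsSeqargu} together with Remark~\ref{rem:acomseq}, for $A\subseteq\calT\subseteq\calA$ and $\Gamma\subseteq\calS$ we have $A\acom\Gamma\Ra\overline{\phi}\in\ArgLRseqSA$ exactly when $A\cup\Gamma\vdash^\calR\overline{\phi}$. Hence it suffices to show that $\calT\in\MCS(\calS,\calA)$ iff for every $\phi\in\calA\setminus\calT$ there are finite $A\subseteq\calT$ and $\Gamma\subseteq\calS$ with $A\cup\Gamma\vdash^\calR\overline{\phi}$. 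In this form the statement is the familiar characterisation of a maximally consistent set through the derivability of the contraries of the ``outside'' assumptions, with consistency read in the ABA-sense of Definition~\ref{def:ABAconsistency}. The whole argument rests on the contrapositivity of $\vdash^\calR$ for assumptions (Definition~\ref{def:contrapass}).

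For the left-to-right direction, assume $\calT\in\MCS(\calS,\calA)$ and fix $\phi\in\calA\setminus\calT$. Since $\calT$ is maximal and $\calT\subsetneq\calT\cup\{\phi\}\subseteq\calA$, the set $\calT\cup\{\phi\}$ is inconsistent, so by Definition~\ref{def:ABAconsistency} there are $\psi\in\calT\cup\{\phi\}$, some $A'\subseteq\calT\cup\{\phi\}$ and some $\Gamma\subseteq\calS$ with $A'\cup\Gamma\vdash^\calR\overline{\psi}$; using monotonicity I may assume $\psi\in A'$. Because $\calT$ itself is consistent, $\phi$ must occur in this witness, i.e.\ $\phi\in A'$, for otherwise $A'\subseteq\calT$ would already refute the consistency of $\calT$. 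The goal is to convert this into a deduction of $\overline{\phi}$ whose assumptions lie in $\calT$. If $\psi\in\calT$ this is one step: applying contrapositivity to the pair $\phi,\psi$ (both in $A'$) gives $(A'\setminus\{\phi\})\cup\{\psi\}\cup\Gamma\vdash^\calR\overline{\phi}$, and $(A'\setminus\{\phi\})\cup\{\psi\}\subseteq\calT$ as required. If instead $\psi=\phi$, I would first manufacture a witness of the previous shape: choosing some $\gamma\in A'\setminus\{\phi\}$ (necessarily in $\calT$) and swapping $\gamma$ against $\phi$ by contrapositivity yields $(A'\setminus\{\gamma\})\cup\{\phi\}\cup\Gamma\vdash^\calR\overline{\gamma}$, a witness with $\gamma\in\calT$ to which the one-step argument then applies (after re-adding $\gamma$ by monotonicity so the final swap is available).

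For the right-to-left direction, assume that for each $\phi\in\calA\setminus\calT$ there are $A\subseteq\calT$ and $\Gamma\subseteq\calS$ with $A\cup\Gamma\vdash^\calR\overline{\phi}$. Maximality is then immediate: any $\calT'$ with $\calT\subsetneq\calT'\subseteq\calA$ contains some $\phi\in\calA\setminus\calT$, and since $A\subseteq\calT\subseteq\calT'$ while $\phi\in\calT'$ and $A\cup\Gamma\vdash^\calR\overline{\phi}$, this exhibits $\phi$ as an assumption of $\calT'$ whose contrary is derivable from $\calT'$ and $\calS$, so $\calT'$ is inconsistent. Thus no proper superset of $\calT$ inside $\calA$ is consistent with $\calS$, and together with the consistency of $\calT$ this gives $\calT\in\MCS(\calS,\calA)$.

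The main obstacle is the contrapositive bookkeeping in the forward direction, and in particular the degenerate witness $\{\phi\}\cup\Gamma\vdash^\calR\overline{\phi}$ in which $\phi$ is the \emph{only} assumption: here there is no second assumption $\gamma$ to swap against, so eliminating $\phi$ from the support would require the extra principle that $\{\phi\}\cup\Gamma\vdash^\calR\overline{\phi}$ already yields $\Gamma\vdash^\calR\overline{\phi}$ (automatic for a classical core but not obtainable from contrapositivity alone). I would also take care over the consistency of $\calT$ in the backward direction, which the displayed hypothesis does not by itself supply and which must be read into membership in $\MCS(\calS,\calA)$ (it is in any case guaranteed in the intended applications, where $\calT$ arises as the assumptions of a conflict-free set of arguments); everything else is routine use of monotonicity and Definition~\ref{def:ABAconsistency}.
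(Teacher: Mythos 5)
Your proof takes essentially the same route as the paper's: reduce the sequent form to ABA-deductions via Lemma~\ref{lem:ABAvsSeqargu}, split the left-to-right direction into the cases $\psi\in\calT$ and $\psi=\phi$, resolve the first by a contrapositive swap, and obtain maximality in the right-to-left direction from the inconsistency of every proper superset. Where you differ is in carefulness, and both caveats you raise are genuine defects of the paper's own proof rather than of yours. In the case $\psi=\phi$ the paper simply asserts ``Then $A'\subseteq\calT$'', silently ignoring the possibility $\phi\in A'$; your two-step swap through some $\gamma\in A'\setminus\{\phi\}$ repairs this whenever $A'\neq\{\phi\}$, and the remaining degenerate witness $\{\phi\}\cup\Gamma\vdash^\calR\overline{\phi}$ cannot be handled by contrapositivity at all: a deductive system whose only rule is $\phi\rightarrow\overline{\phi}$ satisfies Definition~\ref{def:contrapass} vacuously, yet for it the lemma fails at $\calA=\{\phi\}$, $\calT=\emptyset$, $\calS=\emptyset$, so the extra principle you name (for a classical core, a consequence of the deduction theorem) is indeed tacitly needed. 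Likewise, the paper's right-to-left direction concludes ``Hence $\calT$ is maximally consistent'' having shown only that every proper superset is inconsistent, never that $\calT$ itself is consistent --- exactly your second caveat. In short: same approach, correct modulo the gap you explicitly flag, and tighter than the printed proof.
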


\begin{proof}
($\Ra$) Assume that $\calT\in\MCS(\calS,\calA)$ and consider some $\phi\in\calA\setminus\calT$. By Definition~\ref{def:ABAconsistency}, there is some $A'\subseteq\calT\cup\{\phi\}$ and some $\Gamma\subseteq\calS$ such that $A'\cup\Gamma\vdash^\calR\overline{\psi}$ for some $\psi\in\calT\cup\{\phi\}$. Consider two cases: (a) $\psi\in\calT$, then by contraposition, $(A'\setminus\{\phi\})\cup\{\psi\}\cup\Gamma\vdash^\calR\overline{\phi}$; and (b) $\psi = \phi$. Then $A'\subseteq\calT$. 
    
    In both cases there is an $A\subseteq\calT$ and a $\Gamma\subseteq\calS$ such that $A\cup\Gamma\vdash^\calR\overline{\phi}\in\ArgLRSA$. Hence, by Lemma~\ref{lem:ABAvsSeqargu}, $A\acom\Gamma\Ra\overline{\psi}\in\ArgLRseqSA$.     

($\Leftarrow$) Now assume that for each $\phi\in\calA\setminus\calT$, there is some finite $A\subseteq\calT$ and some finite $\Gamma\subseteq\calS$ such that $A\acom\Gamma\Ra\overline{\phi}\in\ArgLRseqSA$, Hence, by Lemma~\ref{lem:ABAvsSeqargu}, $A\cup\Gamma\vdash^\calR\overline{\phi}\in\ArgLRSA$. It follows that for each $\phi\in\calA\setminus\calT$, there are $A\subseteq\calT\cup\{\phi\}$ and $\Gamma\subseteq\calS$ such that $A\cup\Gamma\vdash^\calR\overline{\psi}$ for $\psi\in\calT\cup\{\phi\}$. Hence $\calT$ is maximally consistent.  
\end{proof}

\begin{lemma}
  \label{lem:ABAconsistentass}
  The set $\ass(\ext)$, for any $\ext\in\exts_\cmp(\AFseqABA)$ is consistent. 
\end{lemma}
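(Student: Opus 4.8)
The plan is to argue by contradiction, using conflict-freeness and admissibility of complete extensions together with the characterisation of $\text{AT}_{\sf ABA}$-attacks. Suppose $\ext\in\exts_\cmp(\AFseqABA)$ but $\ass(\ext)$ is inconsistent. By Definition~\ref{def:ABAconsistency} there is some $\phi\in\ass(\ext)$, some $A'\subseteq\ass(\ext)$ and some $\Gamma\subseteq\calS$ with $A'\cup\Gamma\vdash^\calR\overline{\phi}$. Since every deduction is a finite sequence (Definition~\ref{def:ABAdeduction}) and its premise set is taken $\subseteq$-minimal, $A'$ and $\Gamma$ may be assumed finite, so by Lemma~\ref{lem:ABAvsSeqargu} the sequent $b=A'\acom\Gamma\Ra\overline{\phi}$ is an argument in $\ArgLRseqSA$.

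First I would locate a member of $\ext$ that $b$ attacks. Because $\phi\in\ass(\ext)=\bigcup\{\ass(a)\mid a\in\ext\}$, there is some $a\in\ext$ with $\phi\in\ass(a)$; writing $a$ as $(\ass(a)\setminus\{\phi\}),\phi\acom\supp(a)\Ra\conc(a)$ and using that $\conc(b)=\overline{\phi}$, the rule $\text{AT}_{\sf ABA}$ applies and $b$ attacks $a$. Since $\ext$ is complete it is admissible, hence it defends $a$, so $\ext$ must attack $b$: there is some $c\in\ext$ with $(c,b)\in\ATtack$.

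The key step is to turn this attacker $c$ of $b$ into a conflict inside $\ext$. By the shape of $\text{AT}_{\sf ABA}$, $c$ attacks $b$ only because $\conc(c)=\overline{\chi}$ for some $\chi\in\ass(b)=A'$. But $A'\subseteq\ass(\ext)$, so $\chi$ is an assumption of some argument $a_\chi\in\ext$. Then exactly the same instance of $\text{AT}_{\sf ABA}$ shows that $c$ attacks $a_\chi$, because $\conc(c)=\overline{\chi}$ and $\chi\in\ass(a_\chi)$. This gives $c,a_\chi\in\ext$ with $(c,a_\chi)\in\ATtack$, contradicting the conflict-freeness of $\ext$. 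Hence $\ass(\ext)$ must be consistent.

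I expect the main obstacle to be seeing that an attack on the auxiliary argument $b$ cannot be answered harmlessly: any defence of $a$ against $b$ must target an assumption $\chi$ of $b$, and since every assumption occurring in $\ext$ is carried by some argument of $\ext$, that defence immediately attacks a genuine member of $\ext$. Once this propagation from $\ass(\ext)$ back to the arguments of $\ext$ is noticed, the contradiction with conflict-freeness is immediate; care is mainly needed to justify the finiteness of $A'$ so that Lemma~\ref{lem:ABAvsSeqargu} can be invoked.
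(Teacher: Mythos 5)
Your proof is correct, but it is organized differently from the paper's. The paper takes the inconsistency witness $a = A\acom\Gamma\Ra\overline{\phi_i}$ (with $A\subseteq\ass(\ext)$, $\phi_i\in\ass(\ext)$) and first shows that $\ext$ \emph{defends} $a$: any attacker of $a$ must conclude the contrary of some assumption in $A\subseteq\ass(\ext)$, hence it also attacks a genuine member of $\ext$ and is therefore counter-attacked by $\ext$. Completeness then forces $a\in\ext$, and since $\conc(a)=\overline{\phi_i}$ with $\phi_i\in\ass(\ext)$, the argument $a$ attacks members of $\ext$, contradicting conflict-freeness. You run the argument in the opposite direction: your witness $b$ is cast as an \emph{attacker} of $\ext$ (via its conclusion $\overline{\phi}$), admissibility yields a defender $c\in\ext$ attacking $b$, and the shape of $\text{AT}_{\sf ABA}$ forces $c$ to conclude the contrary of some $\chi\in\ass(b)\subseteq\ass(\ext)$, so $c$ attacks a member of $\ext$ --- the same contradiction with conflict-freeness, reached without ever placing the witness inside $\ext$. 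Both proofs rest on the same two observations (attacks are determined by conclusion/assumption pairs, and every assumption in $\ass(\ext)$ is carried by some argument of $\ext$), but yours uses only conflict-freeness and self-defence, i.e.\ admissibility, whereas the paper's makes essential use of the completeness condition that $\ext$ contains every argument it defends. Your route therefore establishes the slightly stronger fact that $\ass(\sfS)$ is consistent for \emph{every admissible} set $\sfS$, which subsumes the stated lemma; the paper's route buys nothing extra here, so your variant is, if anything, marginally more economical. One small point to keep explicit: if $\ass(b)=\emptyset$ then no argument can attack $b$ at all, so the defender $c$ you extract from admissibility cannot exist --- but that is itself the required contradiction, since $b$ does attack $a\in\ext$ and $\ext$ must defend $a$; your text glides over this degenerate case and it is worth a sentence.
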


\begin{proof}
  Assume, towards a contradiction, that $\ass(\ext)=\{\phi_1,\ldots,\phi_n\}$ is not consistent. Then, by Definition~\ref{def:ABAconsistency} and Lemma~\ref{lem:ABAvsSeqargu}, $a=A\acom\quad\Ra\overline{\phi_i}$ is derivable for some $A\subseteq\ass(\ext)$ and $i\in\{1\ldots,n\}$. Suppose that $a$ is attacked by an argument $b = A'\acom\Gamma\Ra\psi\in\ArgLRseqSA$. Then $\psi = \overline{\psi'}$ for some $\psi'\in A$. Hence $\psi'\in\ass(\ext)$. Thus $b$ attacks some argument $a'\in\ext$ as well. Since $a'\in\ext$, there is an argument $c\in\ext$ which defends $a'$ and thus $a$ from the attack by $b$. Since $\ext$ is complete, $a\in\ext$. However, $a$ attacks each $a_j\in\ext$ with $\phi_i\in\ass(a_j)$. A contradiction with the conflict-freeness of the complete extension $\ext$. 
\end{proof}


\begin{lemma}
  \label{lem:ABAMCSStb}
  If $\calT\in\MCS(\calS,\calA)$ then $\ArgLRseq(\calS,\calT)\in\exts_\stb(\AFseqABA)$. 
\end{lemma}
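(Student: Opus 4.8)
The plan is to set $\ext = \ArgLRseq(\calS,\calT)$ and show it is a stable extension by establishing the two defining properties: conflict-freeness, and that $\ext$ attacks every argument of $\AFseqABA$ that lies outside $\ext$. I would first recall the standard Dung fact that any conflict-free set which attacks every argument not belonging to it is automatically admissible and complete, and hence stable; this lets me avoid verifying completeness directly and reduces the lemma to these two claims. Throughout I use that membership $a\in\ext$ is equivalent to $\ass(a)\subseteq\calT$, since every argument of $\AFseqABA$ already satisfies $\supp(a)\subseteq\calS$ and $\ass(a)\subseteq\calA$ by construction.

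For conflict-freeness, suppose for contradiction that $a,b\in\ext$ with $a$ attacking $b$. By the shape of $\text{AT}_{\sf ABA}$ this means $\conc(a)=\overline{\phi}$ for some $\phi\in\ass(b)$. Since $a\in\ext$ we have $\ass(a)\subseteq\calT$ and $\supp(a)\subseteq\calS$, so $a=\ass(a)\acom\supp(a)\Ra\overline{\phi}$, and Lemma~\ref{lem:ABAvsSeqargu} converts this into a deduction $\ass(a)\cup\supp(a)\vdash^\calR\overline{\phi}$. As $b\in\ext$, also $\phi\in\ass(b)\subseteq\calT$. Taking the witness $\phi\in\calT$ together with $A'=\ass(a)\subseteq\calT$ and $\Gamma=\supp(a)\subseteq\calS$ then contradicts the consistency of $\calT$ in the sense of Definition~\ref{def:ABAconsistency}. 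Hence no such attack exists and $\ext$ is conflict-free.

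For the second property, take any argument $b=A'\acom\Gamma'\Ra\phi'$ of $\AFseqABA$ with $b\notin\ext$. Since $\Gamma'\subseteq\calS$ necessarily holds, $b\notin\ext$ forces $A'\not\subseteq\calT$, so there is some $\phi\in A'\setminus\calT\subseteq\calA\setminus\calT$. Now Lemma~\ref{lem:consistencyMCS}, applied to $\calT\in\MCS(\calS,\calA)$ and this $\phi$, yields a finite $A\subseteq\calT$ and finite $\Gamma\subseteq\calS$ with $a=A\acom\Gamma\Ra\overline{\phi}\in\ArgLRseqSA$. Because $\ass(a)=A\subseteq\calT$ we get $a\in\ext$, and since $\conc(a)=\overline{\phi}$ with $\phi\in A'=\ass(b)$, the rule $\text{AT}_{\sf ABA}$ shows that $a$ attacks $b$. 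Thus $\ext$ attacks every argument outside it, and combined with conflict-freeness this gives $\ext\in\exts_\stb(\AFseqABA)$.

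The only real work is the bookkeeping in the conflict-freeness step: one must pass correctly between the sequent formulation and the ABA-deduction formulation via Lemma~\ref{lem:ABAvsSeqargu} and then match the resulting derivation against the precise quantifier structure of the consistency definition. The ``attacks everything outside'' direction is essentially immediate once Lemma~\ref{lem:consistencyMCS} is invoked, so I expect the consistency/conflict-freeness matching to be the main place where care is needed.
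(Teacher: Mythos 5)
Your proof is correct and follows essentially the same route as the paper's: conflict-freeness via the consistency of $\calT$, plus the ``attacks everything outside'' property via Lemma~\ref{lem:consistencyMCS}. You are slightly more careful than the paper in two places---explicitly invoking the standard Dung fact that conflict-freeness plus attacking all outside arguments yields completeness (which the paper's definition of stable formally requires), and explicitly passing through Lemma~\ref{lem:ABAvsSeqargu} in the conflict-freeness step---but these are refinements of the same argument, not a different one.
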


\begin{proof}
  Assume that $\calT\in\MCS(\calS,\calA)$ and let $\ext =\ArgLRseq(\calS,\calT)$. Suppose $\ext$ is not conflict-free. Then there are arguments $a_1 = A_1\acom\Gamma_1\Ra\phi_1$ and $a_2 = A_2\acom\Gamma_2\Ra\phi_2$, such that $a_1,a_2\in\ext$ and $a_1$ attacks $a_2$. Thus $\phi_1=\overline{\psi}$ for some $\psi\in A_2$. However, by assumption $A_1\cup A_2\subseteq\calT$. A contradiction with the assumption that $\calT\in\MCS(\calS,\calA)$.
  
  Now suppose that $A'\acom\Gamma'\Ra\phi'\in\ArgLRseq(\calS,\calA)\setminus\ext$ for some $\Gamma'\subseteq\calS$ and $A'\subseteq\calA$. Thus there is some $\phi\in A'\setminus \calT$. Since, by supposition $\calT\in\MCS(\calS,\calA)$, from Lemma~\ref{lem:consistencyMCS}, there are finite $A\subseteq\calT$, $\Gamma\subseteq\calS$ such that $A\acom\Gamma\Ra\overline{\phi}\in\ArgLRseqSA$. Because $A\subseteq\calT$, $A\acom\Gamma\Ra\overline{\phi}\in\ext$. Hence $A'\acom\Gamma'\Ra\phi'$ is attacked by $\ext$. Therefore $\ext$ attacks every argument in $\ArgLRseq(\calS,\calA)\setminus\ext$ and thus $\ext\in\exts_\stb(\AFseqABA)$. 
\end{proof}

\begin{lemma}
  \label{lem:ABAPrfMCS}
  If $\ext\in\exts_\prf(\AFseqABA)$ then there is some $\calT\in\MCS(\calS,\calA)$ such that $\ext = \ArgLRseq(\calS,\calT)$. 
\end{lemma}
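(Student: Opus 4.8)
The plan is to build the witnessing $\calT$ from the assumptions occurring in $\ext$ and then close the gap between $\ext$ and $\ArgLRseq(\calS,\calT)$ by playing completeness against maximality, rather than by a direct argument-by-argument comparison. Concretely, since $\ext$ is preferred it is in particular complete, so by Lemma~\ref{lem:ABAconsistentass} the set $\ass(\ext)$ is consistent. Because inconsistency is always witnessed by a finite deduction (the clause $A',\Gamma\vdash^\calR\overline{\phi}$ in Definition~\ref{def:ABAconsistency} only involves finite $A'\subseteq A$ and $\Gamma\subseteq\calS$), consistency has finite character, so $\ass(\ext)$ can be extended to a maximal consistent subset $\calT$ of $\calA$; by definition $\calT\in\MCS(\calS,\calA)$ and $\ass(\ext)\subseteq\calT$.

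Next I would invoke Lemma~\ref{lem:ABAMCSStb} to conclude that $\ArgLRseq(\calS,\calT)\in\exts_\stb(\AFseqABA)$, hence $\ArgLRseq(\calS,\calT)$ is in particular a complete extension. I then check the inclusion $\ext\subseteq\ArgLRseq(\calS,\calT)$: any $a\in\ext$ lies in $\ArgLRseqSA$, so $\supp(a)\subseteq\calS$, and $\ass(a)\subseteq\ass(\ext)\subseteq\calT$, whence $a\in\ArgLRseq(\calS,\calT)$. Finally, since $\ext$ is a preferred extension it is a \emph{maximal} complete extension, and $\ArgLRseq(\calS,\calT)$ is a complete extension containing it; maximality therefore forces $\ext=\ArgLRseq(\calS,\calT)$, which is exactly the claim. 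As a sanity check one can verify $\ass(\ext)=\calT$ after the fact, since each $\phi\in\calT$ yields the argument $\phi\acom\,\Ra\phi$ in $\ArgLRseq(\calS,\calT)=\ext$.

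The only real obstacle is resisting the temptation to prove the reverse inclusion $\ArgLRseq(\calS,\calT)\subseteq\ext$ directly, which would require showing that $\ext$ already contains every argument whose assumptions come from $\calT$; this is awkward because $\ext$ is only assumed complete, not stable. The key maneuver is to obtain a stable --- and therefore complete --- extension $\ArgLRseq(\calS,\calT)$ from the maximally consistent set via Lemma~\ref{lem:ABAMCSStb}, sandwich $\ext$ inside it, and let the maximality built into the definition of ``preferred'' close the inclusion for free. The remaining points to be careful about are purely bookkeeping: that stable extensions are complete (immediate from Definition~\ref{def:extension}), that the finite-character argument legitimately produces a maximal consistent $\calT\in\MCS(\calS,\calA)$, and that the supports of arguments in $\ext$ are indeed subsets of $\calS$.
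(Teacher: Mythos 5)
Your proof is correct and matches the paper's own argument: both rest on Lemma~\ref{lem:ABAconsistentass} for the consistency of $\ass(\ext)$, extend it to some $\calT\in\MCS(\calS,\calA)$ so that $\ext\subseteq\ArgLRseq(\calS,\calT)$, invoke Lemma~\ref{lem:ABAMCSStb} to see that $\ArgLRseq(\calS,\calT)$ is stable (hence complete), and close the inclusion by the maximality of preferred extensions. The only differences are presentational: the paper wraps these same steps in a proof by contradiction, whereas you argue directly and make explicit the finite-character (Lindenbaum-style) extension step and the inclusion check that the paper leaves implicit.
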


\begin{proof}
  Suppose, towards a contradiction, that for some extension $\ext\in\exts_\prf(\AFseqABA)$ there is no $\calT\in\MCS(\calS,\calA)$ such that $\ext = \ArgLRseq(\calS,\calT)$. Then there is no $\calT\in\MCS(\calS,\calA)$ such that $\ass(\ext)\subseteq\calT$ and hence, $\ass(\ext)$ is inconsistent. A contradiction with Lemma~\ref{lem:ABAconsistentass} and the supposition that $\ext$ is a preferred extension. Thus, $\ext\subseteq\ArgLRseq(\calS,\calT)$ for some $\calT\in\MCS(\calS,\calA)$. By Lemma~\ref{lem:ABAMCSStb}, $\ArgLRseq(\calS,\calT)$ is stable and thus $\ext = \ArgLRseq(\calS,\calT)$.
\end{proof}

We now turn to the proof of Proposition~\ref{prop:ABAmcs}:

\begin{proof}
  Let $\AFseqABA$ for $\tuple{\calL,\calR}$ a deductive system, $\calS$ a non-trivializing set of $\calL$-formulas, $\calA$ a set of assumptions. 
  Consider each item in both directions:

  \begin{enumerate}
    \item ($\Ra$) Note that $\calA\cup\calS\:\ncass{\prf}^\cap\:\phi$ implies $\calA\cup\calS\:\ncass{\stb}^\cap\:\phi$. Suppose $\calS\:\nncmcs^{\cap,\calA}\:\phi$, but that there is some finite $A\subseteq\calA$ and some $\Gamma\subseteq\calS$ such that $A\acom\Gamma\Ra\phi\in\ArgLRseqSA$. Now, by assumption, $A\not\subseteq\bigcap\MCS(\calS, \calA)$. Hence, there is some $\phi'\in A\setminus \bigcap\MCS(\calS, \calA)$. From which it follows that there is some $\calT\in\MCS(\calS,\calA)$ such that $\phi'\notin\calT$. Therefore $A\acom\Gamma\Ra\phi\notin\ArgLRseq(\calS,\calT)$. By Lemma~\ref{lem:ABAMCSStb}, $\ArgLRseq(\calS,\calT)\in\exts_\stb(\AFseqABA)$, thus $\calA\cup\calS\:\nncaba{\stb}^\cap\:\phi$ (and thus $\calA\cup\calS\:\nncaba{\prf}^\cap\:\phi$) as well. 
    
    ($\Leftarrow$) Suppose that $\calS\:\ncmcs^{\cap,\calA}\:\phi$. Thus, there are finite $A\subseteq\bigcap\MCS(\calS,\calA)$ and $\Gamma\subseteq\calS$ such that $A\acom\Gamma\Ra\phi\in\ArgLRseq(\calS,\calA)$ is derivable. By Lemma~\ref{lem:ABAPrfMCS} $\ArgLRseq(\Gamma, A)\subseteq\bigcap\exts_\prf(\AFseqABA)$. Hence $A\acom\Gamma\Ra\phi\in\bigcap\exts_\prf(\AFseqABA)$. From which it follows that $\calA\cup\calS\:\ncass{\prf}^\cap\:\phi$ and thus $\calA\cup\calS\:\ncass{\stb}^\cap\:\phi$.   
    \item ($\Ra$) Note that $\calA\cup\calS\:\ncaba{\stb}^\cup\:\phi$ implies $\calA\cup\calS\:\ncaba{\prf}^\cup\:\phi$. Suppose that $\calA\cup\calS\:\ncaba{\prf}^\cup\:\phi$. Then there is some $\ext\in\exts_\prf(\AFseqABA)$ such that $A\acom\Gamma\Ra\phi\in\ext$, for $A\subseteq \calA$ and $\Gamma\subseteq\calS$. From Lemma~\ref{lem:ABAPrfMCS} it follows that there is some $\calT\in\MCS(\calS,\calA)$ such that $\ext = \ArgLRseq(\calS,\calT)$ (thus $A\subseteq\calT$). Hence, by Definition~\ref{def:ABAconsistency} and Lemma~\ref{lem:ABAvsSeqargu}, $\phi\in\CN(\calT\cup\calS)$ it follows that $\calS\:\ncmcs^{\cup,\calA}\:\phi$. 
    
    ($\Leftarrow$) Assume that $\calS\:\ncmcs^{\cup,\calA}\:\phi$. Then there is some $\calT\in\MCS(\calS,\calA)$ such that $\phi\in\CN(\calT\cup\calS)$. Therefore, there is a deduction from $A\cup\Gamma\subseteq\calT\cup\calS$ for $\phi$ ($A\cup\Gamma\vdash^\calR\phi\in\ArgLRSA$) and thus, by Lemma~\ref{lem:ABAvsSeqargu} $A\acom\Gamma\Ra\phi\in\ArgLRseqSA$. From Lemma~\ref{lem:ABAMCSStb} it follows that $\ArgLRseq(\calS,\calT)\in\exts_\stb(\AFseqABA)$. Thus $\calA\cup\calS\:\ncaba{\stb}^\cup\:\phi$ as well. 
    \item $\calA\cup\calS\:\ncaba{\stb}^\Cap\:\phi$ implies $\calS\:\ncmcs^{\Cap,\calA}\:\phi$: suppose that $\calS\:\nncmcs^{\Cap,\calA}\:\phi$, then there is some $\calT\in\MCS(\calS,\calA)$ for which $\phi\notin\CN(\calS\cup\calT)$. Hence, there are no $A\subseteq\calT$ and $\Gamma\subseteq\calS$ with $A\acom\Gamma\Ra\phi\in\ArgLRseq(\calS,\calT)$. From Lemma~\ref{lem:ABAMCSStb} it follows that $\ArgLRseq(\calS,\calT)\in\exts_\stb(\AFseqABA)$, thus $\calA\cup\calS\:\nncaba{\stb}^\Cap\:\phi$. 
    
    $\calS\:\ncmcs^{\Cap,\calA}\:\phi$ implies $\calA\cup\calS\:\ncaba{\prf}^\Cap\:\phi$: suppose that $\calA\cup\calS\:\nncaba{\prf}^\Cap\:\phi$. Then there is some extension $\ext\in\exts_\prf(\AFseqABA)$ such that there is no $A\acom\Gamma\Ra\phi\in\ext$ for $A\subseteq \calA$ and $\Gamma\subseteq\calS$. From Lemma~\ref{lem:ABAPrfMCS} it follows that there is some $\calT\in\MCS(\calS,\calA)$ such that $\ArgLRseq(\calS,\calT) = \ext$ and $\phi\notin\CN(\calS\cup\calT)$. Thus $\calS\:\nncmcs^{\Cap,\calA}\:\phi$. 
    
    $\calA\cup\calS\:\ncaba{\prf}^\Cap\:\phi$ implies $\calA\cup\calS\:\ncaba{\stb}^\Cap\:\phi$: this follows immediately since any stable extension is a preferred extension~\cite[Lemma~15]{Dung95}. 
  \end{enumerate}
\end{proof}

\begin{example}
  \label{ex:ABAMCS}
  Recall from Example~\ref{ex:ABAframeworkdeduction} the sets $\calS = \{s\}$ and $\calA = \{p, q, \neg p\vee\neg q, \neg p\vee r, \neg q\vee r\}$. Then $\MCS(\calS,\calA) = \{\{p, q, \neg p\vee r, \neg q\vee r\}, \{p, \neg p\vee\neg q, \neg p\vee r, \neg q\vee r\}, \{q, \neg p\vee\neg q, \neg p\vee r, \neg q\vee r\}\}$. Hence $\bigcap\MCS(\calS,\calA) = \{\neg p\vee r, \neg q\vee r\}$. Therefore, $\calS\:\ncmcs^{\pi, \calA}\:\phi$ for $\pi\in\{\cap,\Cap\}$ and $\phi\in\CN(\{s, \neg p\vee r, \neg q\vee r\})$ and $\calS\:\ncmcs^{\cup,\calA}\:\phi$ for $\phi\in\CN(\calS\cup\calA)$. 
\end{example}

\section{Conclusion}
\label{sec:Conclusion}

In order to allow for reasoning with assumptions, sequent-based argumentation was extended by adding a component for assumptions to each argument, resulting in assumptive sequent-based argumentation. As in sequent-based argumentation, any logic, with a corresponding sound and complete sequent calculus, can be taken as the core logic. 
Due to its generic and modular setting, assumptive sequent-based argumentation is more general than other approaches to reasoning with assumptions, such as assumption-based argumentation (where arguments are constructed by applying modus ponens to an inferential database and for which it was shown that it can be embedded in the here introduced framework), default assumptions~\cite{Mak03} (defined in terms of classical logic) and adaptive logics~\cite{Bat07,Str14} (based on a supra-classical Tarskian logic). Moreover, the proofs in the paper do not rely on the concrete nature of the underlying core logic. It therefore paves the way to equip many well-known logics (e.g., intuitionist logic and many modal logics) with defeasible assumptions.


From here, many future research directions can be taken. For example, the availability of first-order sequent calculi opens up a line of research into first-order generalizations and thus into nonmonotonic systems such as circumscription. Moreover, preferences among assumptions will be investigated. Recently, the relation between different nonmonotonic reasoning systems have been studied, see for an overview~\cite{HeyStr16}. There translations from ASPIC$^+$~\cite{Pra10} and adaptive logics into ABA are provided as well. Though it remains an open question to see how sequent-based argumentation fits within this group of nonmonotonic reasoning systems, these translations suggest that assumptive sequent-based argumentation is expressive enough to capture ASPIC$^+$ and adaptive logics.

\bibliographystyle{plain}
\bibliography{sequent}

\end{document}